\newtheorem*{rep@theorem}{\rep@title}
\newcommand{\newreptheorem}[2]{%
\newenvironment{rep#1}[1]{%
 \def\rep@title{#2 \ref{##1}}%
 \begin{rep@theorem}}%
 {\end{rep@theorem}}}
\newtheorem{theorem}{Theorem}[section]
\newtheorem{lemma}[theorem]{Lemma}
\newtheorem{observation}[theorem]{Observation}
\newcommand{\OPT}{\text{OPT}}
\newcommand{\opt}{\text{OPT}}
\newcommand{\polylog}{\mathop{\mathrm{polylog}}}
\newcommand{\piin}{\pi^{\text{in}}}
\newcommand{\piout}{\pi^{\text{out}}}
\newcommand{\dpt}{\text{DP}}
\newcommand{\leng}{\text{length}}
\newcommand{\depth}{\text{depth}}
\newcommand{\Prob}{\text{Prob}}
\newcommand{\aalpha}{A}
\newcommand{\ssigma}{B}
\newcommand{\rrho}{D}
\newenvironment{smquotehelper}{\begin{list}{}{\setlength{\itemsep}{.25ex}\setlength{\labelsep}{0pt}
\setlength{\topsep}{.25ex}\setlength{\leftmargin}{3em}
\setlength{\parsep}{.1ex}}}{\end{list}}
\newif\ifabstract
\newif\iffull
\begin{document}

\title{A polynomial-time approximation scheme for Euclidean Steiner
  forest\footnote{This version is more recent than that appearing in
    the FOCS proceedings.  The partition step has been corrected and
    the overall presentation has been clarified and formalized. This material is based upon
work supported by the National Science Foundation under Grant Nos.\ CCF-0635089, 
CCF-0964037, and CCF-0963921.}}

\author{Glencora Borradaile\\
School of EECS\\Oregon State University\\glencora@eecs.orst.edu
  \and 
Philip N. Klein\thanks{Work done while visiting MIT's CSAIL.} 
\\Computer Science\\Brown
  University\\klein@brown.edu 
\and 
Claire Mathieu \\CNRS\\ Ecole Normale Sup\'erieure\\cmathieu@di.ens.fr }

\date{}
\maketitle
\thispagestyle{empty}

\begin{abstract}
  We give a randomized $O(n \polylog n)$-time approximation scheme for
  the Steiner forest problem in the Euclidean plane.  For every fixed
  $\epsilon > 0$ and given $n$ terminals in the plane with connection
requests between some pairs of terminals,
  our scheme finds a $(1+\epsilon)$-approximation to the
  minimum-length forest that connects every requested pair of terminals.
\end{abstract}

\section{Introduction}

\subsection{Result and background}

In the Steiner forest problem, we are given a set of $n$ pairs of {\em
  terminals} $\{(t_i,t_i')\}_{i = 1}^n$.  The goal is to find a
minimum-cost forest $F$ such that every pair of terminals is connected
by a path in $F$.  We consider the problem where the terminals are
points in the Euclidean plane.  The solution is a set of line segments
of the plane; non-terminal points with more than two line segments
adjacent to them in the solution are called {\em Steiner points}.  The
cost of $F$ is the sum of the lengths in $\ell_2$ of the line
segments comprising it.  Our main result is:
\begin{theorem}\label{thm:main}
  There is a randomized $O(n \polylog n)$-time approximation
  scheme for the Steiner forest problem in the Euclidean plane.
\end{theorem}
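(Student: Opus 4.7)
My plan is to adapt Arora's randomly-shifted-quadtree framework for Euclidean TSP and Steiner tree to the Steiner forest setting, with an added preprocessing step that clusters demands at each scale of the quadtree.

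First, I would perform the usual geometric preprocessing: after perturbing and rescaling, assume all terminals lie on an integer grid of side $O(n/\epsilon)$ inside a bounding square of comparable size, at the cost of a factor $(1+\epsilon)$ in the approximation. I would then impose a quadtree dissection of the bounding square whose origin is shifted by a uniformly random vector. Next, I would prove a structure theorem along Arora's lines: with probability at least $1/2$ over the random shift, some forest of length at most $(1+\epsilon)\OPT$ is \emph{portal-respecting}, meaning that on every edge of the dissection it crosses only at one of $m=O(\epsilon^{-1}\log n)$ equally spaced portals and at most $r=O(1/\epsilon)$ times in total. The proof is the standard patching argument: wherever a cell boundary is crossed too often, short-circuit the extra crossings along the boundary itself, paying in cell perimeter, and use the random shift to argue that the total expected added cost is at most $\epsilon\OPT$.

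The main obstacle is designing the dynamic program. For Steiner tree, the DP state at a quadtree cell records only (a) the set of portals used on the cell boundary and (b) the partition of those portals into the connected components that the sub-solution inside the cell has produced. For Steiner \emph{forest} this is insufficient, because whether a demand pair is satisfied may depend on connections that span several cells at different scales, and the solution inside a single cell need not even be a tree. In principle the DP state must also record, for every demand pair with an endpoint inside the cell, whether the two endpoints are already joined inside the cell or still need to be joined through the boundary portals, and more finely, how the active portals group with respect to each unsatisfied demand. A naive formulation has state space exponential in the number of demand endpoints inside the cell, which is far too large.

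To control the state size I would insert a \emph{partition step} between the dissection and the DP (this is the step the footnote refers to as having been corrected): at each level of the quadtree, cluster nearby endpoints of demand pairs into a small set of representatives so that only $O_\epsilon(1)$ demand pairs are ``open'' across the boundary of any cell, while paying at most $\epsilon \OPT$ in extra length. Once only $O_\epsilon(1)$ pairs can be open across any cell, the DP state has size $2^{O_\epsilon(1)}$, and the children of a cell can be merged into their parent in $O_\epsilon(1)$ time per cell. Summing over the $O(n\log n)$ cells of the quadtree gives the claimed $O(n\polylog n)$ running time, and combining with the structure theorem and the random shift yields a $(1+\epsilon)$-approximation in expectation, proving Theorem~\ref{thm:main}. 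I expect the partition step itself to be the main technical difficulty, since it must simultaneously preserve geometric cost up to $1+\epsilon$ and preserve the combinatorial constraints of which pairs must be connected.
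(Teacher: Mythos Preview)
Your outline has two genuine gaps. First, you gloss over the discretization: for Steiner forest, unlike TSP or Steiner tree, the diameter of the bounding box need not be $O(\OPT)$, so you cannot simply rescale to an $O(n/\epsilon)$ grid. The paper's \emph{partition step} (the one the footnote refers to) addresses exactly this: it is a \emph{global} decomposition of the instance into independent subinstances, using long minimum-spanning-tree edges, so that each subinstance has diameter at most $n^2\,\db(Q)$ and hence bounded by a polynomial in $n$ times $\OPT$. It has nothing to do with clustering demands at each quadtree level.

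Second, and more importantly, your proposed fix for the DP---clustering demand endpoints so that only $O_\epsilon(1)$ pairs are ``open'' across any cell boundary---is not what the paper does, and it is not clear it can be made to work: a single quadtree square can contain $\Omega(n)$ terminals whose mates lie outside, each potentially needing a different component, and there is no evident way to merge them into $O_\epsilon(1)$ representatives at cost $\epsilon\,\OPT$. The paper's mechanism is different and more structural: each dissection square $R$ is subdivided into a $\ssigma\times\ssigma$ grid of \emph{cells} with $\ssigma=O(1/\epsilon)$, and the Structure Theorem adds a \emph{Cell Property} asserting that in a near-optimal solution each cell meets at most one component of $F$ that also touches $\partial R$. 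The DP then records, for each square, a subpartition of its $O(1/\epsilon^2)$ cells together with its $O(\epsilon^{-1}\log n)$ portals (plus a coarsening encoding the connectivity still required outside), rather than anything indexed by individual terminals. Establishing the Cell Property costs an expected $O(\epsilon)\,\OPT$ via an augmentation argument that charges added cell boundaries to components of the original solution through an injective map; this is the technical heart that your proposal is missing.
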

An approximation scheme is guaranteed, for a fixed $\epsilon$, to find a solution whose total length is an most $1+\epsilon$ times the length of a minimum solution.

Independently, Mitchell~\cite{Mitchell99} and Arora~\cite{Arora98}
developed a method for designing polynomial-time approximation schemes
(PTASes) for problems such as such as Traveling Salesman and Steiner
tree in the Euclidean plane.  The running time for Arora's technique
was improved upon by Rao and Smith for the Steiner tree and TSP
problems~\cite{RS98} and others have extended these techniques to give
PTASes for other problems, e.g., k-medians~\cite{ARR98,KR07}.  Our work
builds on the these techniques, using the framework as described by
Arora.

The Steiner forest problem, a generalization of the Steiner tree
problem, is NP-hard~\cite{Karp75} and max-SNP
complete~\cite{BP89,Thimm01} in general graphs and high-dimensional
Euclidean space~\cite{Trevisan01}.  Therefore, no PTAS exists for
these problems.  The 2-approximation algorithm due to Agrawal, Klein
and Ravi~\cite{AKR95} can be adapted to Euclidean problems by
restricting the Steiner points to lie on a sufficiently fine grid and
converting the problem into a graph problem.  

We have formulated the connectivity requirements in terms of {\em
  pairs} of terminals.  One can equivalently formulate these in terms of
{\em sets} of terminals: the goal is then to find a forest in which
each set of terminals are connected.  Arora states~\cite{Arora2003}
that his approach yields an approximation scheme whose running time is
exponential in the number of sets of terminals, and this is the only
previous work to take advantage of the Euclidean plane to get a better
approximation ratio than that of Agrawal et al.~\cite{AKR95}.

\subsection{Recursive dissection}\label{subsection:dissection}

In Arora's paradigm, the feasible space is recursively decomposed by
{\em dissection squares} using a randomized variant of the quadtree
(Figure~\ref{fig:dissections}).  The dissection is a 4-ary tree whose
root is a square box enclosing the input terminals, whose width $L$ is
twice the width of the smallest square box enclosing the terminals,
and whose lower left-hand corner of the root box is translated from
the lower left-hand corner of the bounding box by $(-a,-b)$, where $a$
and $b$ are chosen uniformly at random from the range $[0,L/2)$. Each
node in the tree corresponds to a {\em dissection square}.  Each
square is dissected into four parts of equal area by one
vertical and one horizontal {\em dissection line} each spanning the
breadth of the root box.  This process continues until each
square contains at most one terminal (or multiple terminals having the
same coordinates). 

\begin{figure}[h]
  \centering
  \includegraphics[scale = 2]{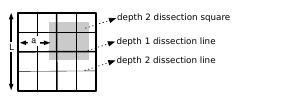}
  \caption{The shifted quad-tree dissection.  The shaded box is
  the bounding box of the terminals.}
  \label{fig:dissections}
\end{figure}

Feasible solutions are restricted to using a small number of {\em
  portals}, designated points on each dissection line.  A Structure
Theorem states that there is a near-optimal solution that obeys these
restrictions.  The final solution is found by a dynamic program guided
by the recursive decomposition.

In the problems considered by Arora, the solutions are connected.
However, the solution to a Steiner forest problem is in general
disconnected, since only paired terminals are required to be
connected.  It is not known {\em a priori} how the connected
components partition the terminal pairs.  For that reason, maintaining
feasibility in the dynamic program requires a table that is is
exponential in the number of terminal pairs.  In fact, Arora
states~\cite{Arora2003} that his approach yields an approximation
scheme whose running time is exponential in the number of sets of
terminals.

Nevertheless, here we use Arora's approach to get an approximation
scheme whose running time is polynomial in the number of sets of
terminals.  The main technical challenge is in maintaining feasibility in a small dynamic programming table.

\subsection{Small dynamic programming table} \label{sec:overview}

We will use Arora's approach of a random recursive dissection.  Arora
shows (ie.~for Steiner tree) that the optimal solution can be
perturbed (while increasing the length only slightly) so that, for
each box of the recursive dissection, the solution within the box
interacts weakly and in a controlled way with the solution outside the
box.  In particular, the perturbed solution crosses the boundary of
the box only a constant number of times, and only at an $O(1)$-sized
subset of $O(\log n)$ selected points, called {\em portals}.  The
optimal solution that has this property can be found using dynamic
programming.

Unfortunately, for Steiner forest those restrictions are not
sufficient: maintaining feasibility constraints cannot be done with a
polynomially-sized dynamic program. To see why, suppose the solution
uses only 2 portals between adjacent dissection squares $R_E$ and
$R_W$. In order to combine the solutions in $R_W$ and $R_E$ in the
dynamic program into a feasible solution in $R_W\cup R_E$, we need to
know, for each pair $(t,t')$ of terminals with $t \in R_W$ and $t' \in
R_E$, which portal connects $t$ and $t'$ (Figure~\ref{fig:prop5}(a)).
This requires $2^n$ configurations in the dynamic programming table.

\begin{figure}[h]
  \centering
  \includegraphics[scale=2]{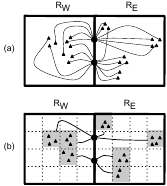}
  \caption{Maintaining feasibility is not trivially polynomial-sized.}
  \label{fig:prop5}
\end{figure}

To circumvent the problem in this example, the idea is to decompose
$R_W$ and $R_E$ into a constant number of smaller dissection squares
called {\em cells}.  All terminals in a common cell that go to the
boundary use a common portal.  Thus, instead of keeping track of
each terminal's choice of portal individually, the dynamic program can
simply memoize each cell's choice of portal.  The dynamic program
also uses a specification of how portals must be connected {\em outside} the
dissection squares.  This information is sufficient to check feasibility when
combining solutions of the subproblems for $R_W$ and for $R_E$.
To show near-optimality, we show that a constant number of cells per square is sufficient for finding a nearly-optimal solution.

\paragraph{Basic notation and definitions}

For two dissection squares $A$ and $B$, if $A$ encloses $B$, we say
that $B$ is a {\em descendent} of $A$ and $A$ is an {\em ancestor} of
$B$.  If no other dissection square is enclosed by $A$ and encloses
$B$, we say that $A$ is the {\em parent} of $B$ and $B$ is the {\em
  child} of $A$.  We will extend these definitions to describe
relationships between cells.  The {\em depth} of a square $S$ is given
by its depth in the dissection tree ($0$ for the root).  The depth of a dissection line
is the minimum depth of squares it separates.   Note that a square at depth $i$ is bounded
by two perpendicular depth-$i$ lines and two lines of depth less than
$i$.

For a line segment $s$ (open or closed), we use $\leng(s)$ to denote
the $\ell_2$ distance between $s$'s endpoints.  For a set of line
segments $S = \{s_1, s_2, \ldots\}$, $\leng(S) = \sum_i \leng(s_i)$.
For a subset $X$ of the Euclidean plane, a component of $X$ is a
maximal subset $Y$ of $X$ such that every pair of points in $Y$ are
path-wise connected in $X$.  We use $|X|$ to denote the number of
components of $X$.  \newcommand{\diam}{\text{diam}} The diameter of a
connected subset $C$ of the Euclidean plane, $\diam(C)$, is the maximum $\ell_2$ distance between any pair of points in $C$.  We use $\opt$ to denote both the line segments forming an optimal solution and the length of those line segments.

\section{The algorithm}

The algorithm starts by finding a rough {\em partition} of the terminals which is a coarsening of the 
connectivity requirements (subsection~\ref{subsection:step1}).  We solve each part of this partition independently.  
We next {\em discretize} the problem by moving the terminals to integer coordinates of a sufficiently fine grid 
(subsection~\ref{subsection:step2}).  We will also require that the Steiner points be integer coordinates.  
We next perform a recursive dissection (subsection~\ref{subsection:step3}) and assign points on the dissection lines as portals
 (subsection~\ref{subsection:step4}) as introduced in Section~\ref{subsection:dissection}.  We then break each dissection square into a small number of cells.  We find the best feasible solution $F$ to the
discretized problem that only crosses between dissection squares
at portals and such that for each cell $C$ of dissection square $R$,
$F \cap R$ has only one component that connects $C$ to the boundary of
$R$ (subsection~\ref{subsection:step5}).  

We will show that the expected length of $F$ is at most a
$\frac{4}{10}\epsilon$ fraction longer than $\OPT$.  By Markov's inequality, with probability at least
one-half the $\leng(F)\leq (1+\frac{8}{10}\epsilon) \OPT$.  We show
that by moving the terminals back to their original positions (from
their nearest integer coordinates) increases the length by at most
$\frac{\epsilon}{40}\OPT$.  Therefore, the output solution has length
at most $(1+\epsilon)\OPT$ with probability one half.

We now describe each of these steps in detail.

\subsection{Partition}\label{subsection:step1}

We first partition the set of terminal pairs, creating subproblems
that can be solved independently of each other without loss of
optimality.  The purpose of this partition is to bound the size of the
bounding box for each problem in terms of $\OPT$.  This bound is
required for the next step, the result of which allows us to treat
this geometric problem as a combinatorial problem.  This
discretization was also key to Arora's scheme, but the bound on
the size of the bounding box for the problems he considers is
trivially achieved.  This is not the case for the Steiner forest
problem.  The size of the bounding box of all the terminals in an
instance may be unrelated to the length of $\OPT$.

\newcommand{\db}{\text{dist}} Let $Q$ be the set of $m$ pairs
$\{(t_i,t'_i)\}_{i=1}^m$ of $n$ terminals.  Consider the Euclidean
graph whose vertices are the terminals and whose edges are the line
segments connecting terminal pairs in $Q$ and let $C_1, C_2, \ldots$
be the components of this graph.  Let $\db(Q)=\max_i \diam(C_i)$; this
is the maximum distance between any pair of terminals that must be
connected.

\begin{theorem}\label{thm:modify2}
  There exists a partition of $Q$ into independent instances $Q_1,
  Q_2, \ldots$ such that the optimal solution for $Q$ is the disjoint
  union of optimal solutions for each $Q_i$ and such that the diameter
  of $Q_i$ is at most $n_i^2 \db(Q_i)$ where $n_i$ is the number of
  terminals in $Q_i$.  Further, this partition can be found in
  $O(n\log n)$ time.
\end{theorem}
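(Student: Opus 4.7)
The plan is to compute the connected components $C_1,\ldots,C_k$ of the request graph and then cluster them into groups based on Euclidean proximity, returning these clusters as the parts $Q_s$ of the partition.

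First I would compute the components $C_i$ and their diameters $d_i = \diam(C_i)$ in $O(n\log n)$ time. Next, I would build an auxiliary graph $H$ with one vertex per $C_i$ and an edge $(C_i,C_j)$ weighted by the Euclidean distance between the closest pair of terminals across $C_i$ and $C_j$, and compute a Euclidean minimum spanning tree $M$ of $H$ in $O(n\log n)$ time. I would then process the edges of $M$ in Kruskal-style increasing-weight order through a union--find structure, merging the clusters containing an edge's endpoints \emph{only} when the edge weight is at most a threshold $\tau$ chosen as a small polynomial in the $\db$-value of the would-be-merged cluster (informally, $\tau \approx \db(A\cup B)$ for the two clusters $A,B$ being considered). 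Edges exceeding $\tau$ are cut, and the connected components of the surviving forest give the partition.

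The main obstacle is to prove that OPT decomposes along this partition, i.e., that some optimal Steiner forest has no tree-component spanning terminals from two different clusters. This would be shown by an exchange argument: if some optimal Steiner tree $T$ contained terminals of $C_i\in Q_s$ and $C_j\in Q_t$ separated by a cut MST edge of weight $w>\tau$, then $T$ must contain a path of Euclidean length at least $w$ bridging the two clusters. Because $w$ exceeds $\tau$, which in turn upper-bounds the length needed to re-span each cluster's terminals by concatenating its request edges (trivially at most $n_s\,\db(Q_s)$ for $Q_s$ and $n_t\,\db(Q_t)$ for $Q_t$), we can replace $T$ with two independent Steiner forests on $Q_s$ and $Q_t$ and obtain a strictly shorter feasible solution, contradicting optimality. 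The delicate part---and the one flagged in the footnote as needing correction from the FOCS version---is that $T$ may interleave terminals of $C_i$, $C_j$, and other request-components in ways that make it nontrivial to cut $T$ cleanly into two feasible pieces; I would handle this by examining the minimal subtrees of $T$ spanning each request-component's terminals and performing a careful cut-and-paste that preserves the connectivity of every other request-component simultaneously.

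The diameter bound then follows by noting that each cluster $Q_s$ is connected by at most $n_s/2$ surviving MST edges, each of weight at most $\tau = O(\db(Q_s))$, together with at most $n_s/2$ sub-components each of diameter at most $\db(Q_s)$; hence the diameter of $Q_s$ is at most $O(n_s)\cdot\db(Q_s) \le n_s^2\,\db(Q_s)$, with substantial slack. The total running time is dominated by the Euclidean MST step and is $O(n\log n)$.
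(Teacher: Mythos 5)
Your high-level idea---cut the problem along long Euclidean gaps between groups of requested components---matches the spirit of the paper, but your concrete construction and, crucially, your decomposability argument have a real gap.

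The paper's construction is simpler and the key inequality is chosen so that no exchange argument is needed. It computes a \emph{single} Euclidean minimum spanning tree $T$ on all terminals, and repeatedly removes the longest edge $e$ whenever $\leng(e) > n\,\db(Q)$ (with $n$ and $\db(Q)$ re-evaluated on the current piece). Now observe two facts: (i) a feasible solution can always be built from the request-component spanning trees, each of whose edges has length at most $\db(Q)$, so $\OPT < n\,\db(Q)$; and (ii) by the MST cut property, after removing $e$ every terminal of $T_1$ is at distance at least $\leng(e) > n\,\db(Q)$ from every terminal of $T_2$. Hence any single connected component of $\OPT$ touching a terminal on each side would, by itself, have length exceeding $\OPT$---a contradiction. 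So $\OPT$ decomposes \emph{automatically}; no cut-and-paste or re-spanning of partial trees is required. The diameter bound then falls out because the remaining MST edges each have length at most $n_i\,\db(Q_i)$, giving a spanning path of length at most $n_i^2\,\db(Q_i)$ between any two terminals.

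Your proposal, by contrast, sets the threshold $\tau$ to roughly $\db(A\cup B)$ and then argues by exchange. Two problems arise. First, your threshold is inconsistent with your own bound: you invoke a replacement cost of $n_s\,\db(Q_s) + n_t\,\db(Q_t)$, which can vastly exceed $\db(A\cup B)$, so the inequality $w>\tau$ does not imply that the bridging path is more expensive than re-spanning the two sides. Second, even if you raise $\tau$, the exchange you sketch---excise the bridging path and substitute two independent forests---is exactly the ``delicate'' step you flag and never complete: in a Steiner forest, that path may be load-bearing for other requests in the same optimal component, and snipping it can destroy feasibility elsewhere. The paper avoids this entirely by choosing the cut threshold above the \emph{whole} of $\OPT$, so that the mere existence of a bridging component is already a contradiction. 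To repair your argument, you should either (a) raise the threshold to $n\,\db(Q)$ and observe that no exchange is needed because $\OPT$ cannot afford a bridge at all, or (b) carefully formalize the cut-and-paste so that every other request remains connected, which is considerably harder. As written, the diameter bound also inherits the threshold ambiguity: you claim $\tau = O(\db(Q_s))$ and conclude $O(n_s)\cdot\db(Q_s)$, but a correct threshold on the order of $n\,\db$ gives the stated $n_i^2\,\db(Q_i)$ bound, not a linear one.
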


\noindent We will show that the following algorithm, {\sc
  Partition}$(Q)$, produces such a partition. Let $T$ be the minimum spanning tree of the terminals in $Q$ .
\begin{tabbing}
  {\sc Partition}$(Q,T)$\\
  \qquad \= Let $e$ be the longest edge of $T$. \\
  \> If $\leng(e) > n\,\db(Q)$, \\
  \> \qquad \= remove $e$ from $T$ and let $T_1$ and $T_2$ be the
  resulting components.\\
  \> \> For $i = 1, 2$, let $Q_i$ be the subset of terminal pairs connected by $T_i$.\\
  \> \> $T:= \text{\sc Partition} (Q_1,T_1) \cup \text{\sc Partition} (Q_2,T_2)$.\\
  \> Return the partition defined by the components of $T$.
\end{tabbing}

\begin{proof}[Proof of Theorem~\ref{thm:modify2}]
  First observe that by the cut property of minimum spanning trees, the
  distance between every terminal in $T_1$ and every terminal in $T_2$
  is at least as long as the edge that is removed.
  
  Since a feasible solution is given by the union of minimum spanning
  trees of the sets of the requirement partition, and each edge in
  these trees has length at most $\db(Q)$, $\OPT < n\, \db(Q)$.
  $\OPT$ cannot afford to connect a terminal of $T_1$ to a terminal of
  $T_2$, because the distance between any terminal in $T_1$ and any
  terminal in $T_2$ is at least $n\, \db(Q)$ which is greater than the
  lower bound. (By definition of $\db$, there cannot have been a
  requirement to connect a terminal of $T_1$ to a terminal of $T_2$.)
  Therefore, $\OPT$ must be the union of two solutions, one for the
  terminals contained by $T_1$ and one for the terminals contained by
  $T_2$.  Inductively, the optimal solution for $Q$ is the union of
  optimal solutions for each set in {\sc Partition}$(Q)$, giving the
  first part of the theorem.

  The stopping condition of {\sc Partition} guarantees that there is a 
  spanning tree of the terminals in the current subset $Q_i$ of terminals
  whose edges each have length at most $n_i\,\db(Q_i)$.  Therefore,
  there is a path between each pair of terminals of length at most
  $n_i^2\,\db(Q_i)$, giving the second part of the theorem.

  Finally, we show that {\sc Partition} can be implemented to run in
  $O(n \log n)$ time. The diameter of a set of points in the Euclidean
  plane can be computed by first finding a convex hull and this can be
  done in $O(n\log n)$ by, for example, Graham's
  algorithm~\cite{Graham72}.  Therefore, $\db(C_i)$ can be computed in
  $O(n \log n)$ time.  The terminal-pair sets $Q_1$ and $Q_2$ for the
  subproblems need not be computed explicitly as the required
  information is given by $T_1$ and $T_2$.  By representing $T$ with a
  top-tree data structure, we can find $n_i$ and $d(Q_i)$ by way of a
  cut operation and a sum and maximum query, respectively, in $O(\log
  n)$ time~\cite{GGT91}.  Since there are $O(n)$ recursive calls, the
  total time for the top-tree operations is $O(n \log n)$.
\end{proof}

Our PTAS finds an approximately optimal solution to each subproblem
$Q_i$ (as defined by Theorem~\ref{thm:modify2}) and combines the
solutions.  For the remainder of our description of the algorithm, we
focus on how the algorithm addresses one such subproblem $Q_i$.  In
order to avoid carrying over subscripts and arguments $Q_i$,
$\db(Q_i)$, $n_i$ throughout the paper, from now on we will consider
an instance given by $Q$, $\db(Q)$, and $n$, and assume it has the
property that the maximum distance between terminals, whether
belonging to a requirement pair or not, is at most $n^2 \db(Q)$.  $\OPT$
will refer to the length of the optimal solution for this subproblem.

\subsection{Discretize}\label{subsection:step2}

We would like to treat the terminals as discrete combinatorial objects.  In
order to do so, we assume that the coordinates of the
terminals lie on an integer grid.  We can do so by {\em scaling} the
instance, but this may result in coordinates of unreasonable size.
Instead, we scale by a smaller factor and {\em round} the positions
of the terminals to their nearest half-integer coordinates.

\subsubsection*{Scale} 

We scale by a factor of \[\frac{40\sqrt 2 n}{\epsilon\, \db(Q)}.\]  Before scaling,
$\OPT \ge \db(Q)$, the distance between the furthest pair
of terminals that must be connected. After scaling we get the
following lower bound:
\begin{equation}
  \label{eq:OPT-lb}
  \OPT \geq \frac {40 \sqrt 2 n} {\epsilon}
\end{equation}
Before scaling, $\diam(Q) \le n^2\,\db(Q)$ by Theorem~\ref{thm:modify2}. After scaling we get the following upper
bound on the diameter of the terminals:
\begin{equation}
  \label{eq:diam-ub}
  \diam(Q) \leq \frac {40 \sqrt 2 n^3} {\epsilon}
\end{equation}
Herein, $\OPT$ refers to distances in the scaled version.

\subsubsection*{Round} 

We round the position of each terminal to the nearest grid center.
Additionally, we will search for a solution that only uses Steiner
points that are grid centers.  We call this constrained problem the
{\em rounded problem}.  The rounded problem may merge terminals (and
thus, their requirements).

\begin{lemma}\label{lemma:rounding}  
  A solution to the Steiner forest can be derived from an optimal
  solution to the rounded problem at additional cost at most 
  $\frac \epsilon {40} \OPT$.
\end{lemma}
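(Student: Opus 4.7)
The plan is direct: given an optimal solution $F^\star$ to the rounded problem, I would construct a feasible Steiner forest $\widehat{F}$ for the original instance by adding to $F^\star$, for each terminal $t$, the straight line segment joining $t$ to its rounded image.

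For feasibility, every required pair $(t_i,t_i')$ is connected in $\widehat{F}$ because $F^\star$ already connects the rounded images of $t_i$ and $t_i'$ (trivially so when both round to a common grid point), and the two added connector segments glue each original terminal to its rounded image. The possibility that several original terminals collapse onto a single grid center is handled automatically by this construction, since the corresponding connectors share an endpoint.

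For the length bound, each added segment has length at most half the diagonal of a unit grid cell, namely $\sqrt{2}/2$, and at most one connector is added per terminal. Thus the total added length is at most $n\sqrt{2}/2$. Invoking the post-scaling lower bound~(\ref{eq:OPT-lb}), namely $\OPT \geq 40\sqrt{2}\,n/\epsilon$, which is calibrated precisely so that this rounding overhead is absorbed, I get $n\sqrt{2}/2 \leq (\epsilon/80)\,\OPT \leq (\epsilon/40)\,\OPT$, which is the claimed bound.

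There is essentially no obstacle; the argument is a clean unfolding of the scaling choice made in Subsection~\ref{subsection:step2}. The only mild subtlety is the degenerate case of terminals that coincide after rounding, which the construction above handles without modification by routing the merged terminals through their common grid-center representative.
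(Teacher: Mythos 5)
Your proof is correct and follows essentially the same approach as the paper: attach each original terminal to its rounded image by a segment of length at most $1/\sqrt{2}$, bound the total added length by $n/\sqrt{2}$, and absorb it via the scaled lower bound~\eqref{eq:OPT-lb}. The only difference is cosmetic --- you explicitly note the merged-terminal degenerate case and observe that the constant actually yields $\epsilon/80$, which is a harmless strengthening of the stated $\epsilon/40$ bound.
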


\begin{proof}
  Let $F$ be an optimal solution to the rounded problem.  From this we
  build a solution to the original problem by connecting the original
  terminals to their rounded counterparts with line segments of length
  at most $1/ \sqrt 2$, ie.~half the length of the diagonal of a unit
  square.  There are $n$ terminals, so the additional length is at
  most $n/\sqrt 2$ which is at most ${\epsilon \over 40} \OPT$ by
  Equation~\eqref{eq:OPT-lb}.
\end{proof}

Let $F$ be an optimal solution to the rounded problem.  We relate the number of intersections of $F$ with grid lines to $\leng(F)$.  We will bound the
cost of our restrictions to portals and cells with this relationship.

\begin{lemma} \label{lem:sum-of-crossings} There is a solution to the
  rounded problem of length $(1+\frac{1}{10}\epsilon)\OPT$ that
  satisfies
  \begin{equation} \label{eq:crossings-vs-length}
      \sum_{\text{grid lines }\ell} |F \cap \ell| \leq 3 \OPT.
  \end{equation}
\end{lemma}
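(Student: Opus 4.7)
The plan is to exhibit $F$ by snapping an optimal Euclidean Steiner forest for the unrounded terminals to the grid, and then to count crossings by exploiting the integer-valued horizontal and vertical projections of the resulting segments.

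First I would take any optimum $F^\star$ of the (unrounded) Steiner forest problem, so $\leng(F^\star) = \OPT$. By local optimality I may assume every Steiner point of $F^\star$ has degree $3$, so by the usual counting argument applied componentwise there are at most $2n-2$ Steiner points, and hence at most $4n-3$ edges in total. Form $F$ by moving each vertex of $F^\star$---terminals and Steiner points alike---to its nearest grid center and reconnecting the images by straight segments. Since each original terminal moves to its rounded copy, $F$ is a feasible solution to the rounded problem.

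Each vertex moves by at most $1/\sqrt 2$, so by the triangle inequality each segment of $F$ is at most $\sqrt 2$ longer than its pre-image in $F^\star$. With at most $4n-3$ segments the total length increase is $O(n)$; plugging in $n \le \epsilon\OPT/(40\sqrt 2)$ from~\eqref{eq:OPT-lb} gives an increase of at most $\epsilon\OPT/10$, so $\leng(F) \le (1+\epsilon/10)\,\OPT$. For the crossing bound, every segment of $F$ has both endpoints at half-integer coordinates, so its horizontal and vertical projections $h,v$ are nonnegative integers, and it therefore crosses exactly $h$ vertical and $v$ horizontal grid lines. Summing $h+v \le \sqrt 2\,\sqrt{h^2+v^2}$ over segments (and perturbing, if necessary, to avoid a segment lying along a grid line), we obtain
\[
  \sum_{\text{grid lines }\ell} |F\cap\ell| \;\le\; \sqrt 2\,\leng(F) \;\le\; \sqrt 2\,(1+\epsilon/10)\,\OPT \;<\; 3\,\OPT,
\]
which is exactly~\eqref{eq:crossings-vs-length}.

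The main obstacle is the length bound: it leans on the combinatorial structure of Euclidean Steiner forest optima (the linear bound on the number of Steiner points, which in turn depends on their degree-$3$ structure) together with the scaling lower bound~\eqref{eq:OPT-lb} on $\OPT$, in order to absorb the entire snapping error into the allotted $\epsilon\OPT/10$. Once $F$ has been constructed this way, the crossing bound is essentially automatic from the integer projections of its segments.
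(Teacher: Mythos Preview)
Your proof is correct and follows essentially the same approach as the paper: snap the vertices of an optimal solution to grid centers, bound the added length via the linear edge count together with Equation~\eqref{eq:OPT-lb}, and then bound grid-line crossings by the total segment length. The only differences are cosmetic---the paper uses the looser estimate $h+v\le 2s$ in place of your $h+v\le\sqrt 2\,s$, and your perturbation remark is unnecessary since half-integer endpoints already keep segments off the integer grid lines.
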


\begin{proof}
  We build a solution $F$ to the rounded problem from $\OPT$ by
  replacing each line segment $e$ of $\OPT$ with a line segment $e'$ that connects
  the half-integer coordinates that are nearest $e$'s endpoints
  (breaking ties arbitrarily but consistently).  Since the additional length needed for this transformation is at most twice (for each
  endpoint of $e$) the distance from a point to the nearest
  half-integer coordinate:
\[
\leng(e') \le \leng(e)+\sqrt 2
\]
Since $\OPT$ has at most $n$ leafs, $\OPT$ has fewer than $n$ Steiner points
and so has fewer than $4n$ edges.  The additional length is therefore
no greater than $4\sqrt 2 n$.  Combining with
Equation~\eqref{eq:OPT-lb}, this is at most ${1 \over 10}\epsilon
\OPT$.

  $F$ is composed of line segments whose endpoints are half-integer
  coordinates.  Such a segment $S$ of length $s$ can cross at most $s$
  horizontal grid lines and at most $s$ vertical grid lines.
  Therefore
  \[
  \sum_{\text{grid lines }\ell} |S \cap \ell| \leq 2s
  \]
  and summing over all segments of $F$ gives
  \[
  \sum_{\text{grid lines }\ell} |F \cap \ell| \leq 2\,\leng(F) \leq 2(1+{1
    \over 10}\epsilon)\OPT < 3\OPT
  \]
  where the last inequality follows from $\epsilon < 1$.
\end{proof}

From here on out, our goal is to find the solution that is guaranteed
by Lemma~\ref{lem:sum-of-crossings}.  We will not be able to find this
solution optimally, but will be able to find a solution within our error bound of $\epsilon\, \opt$.

\subsection{Dissect}\label{subsection:step3}

The recursive dissection starts with an $L \times L$ box that encloses
the terminals and where $L$ is at least twice as big as needed.  This
allows some choice in where to center the enclosing box.  We make this
choice randomly.  This random choice is used in bounding the incurred
cost, in expectation, of structural assumptions
(Section~\ref{sec:cell-props}) that help to reduce the size of the
dynamic programming table.

Formally, let $L$ be the smallest power of $2$ greater than $2
\cdot\text{diameter}(Q)$. In combination with
Equation~\eqref{eq:diam-ub}, we get the following upper bound on $L$:
\begin{equation}
L \leq { 160 \sqrt 2 \over \epsilon} n^3 \label{eq:L}
\end{equation}
The $x$-coordinate (and likewise the $y$-coordinate) of the lower left
corner of the enclosing box are chosen uniformly at random from the
$L/2$ integer coordinates that still result in an enclosing box.  We
will refer to this as the {\em random shift}.

As described in section~\ref{subsection:dissection}, we perform a
recursive dissection of this enclosing box.  This can be done in $O(n
\log n)$ time~\cite{BET93}.  By our choice of $L$ and the random
shift, this dissection only uses the grid lines. Since the recursive
dissection stops with unit dissection squares, the quad-tree has depth
$\log L$.

Consider a vertical grid line $\ell$.  Since there are $L/2$ values of
the horizontal shift, and $2^{i-1}$ of these values will result in $\ell$
being a depth-$i$ dissection line, we get
\begin{equation}\label{eq:prob}
  \Prob[\depth(\ell)=i] = 2^i/L
\end{equation}

\subsection{Designate portals}\label{subsection:step4}

We designate a subset of the points on each dissection line as {\em
  portals}.  We will restrict our search for feasible solutions that
cross dissection lines at portals only.  We use the portal constant
$\aalpha$, where
\begin{equation}
  \label{eq:inter-portal-distance}
  \aalpha \text{ is the smallest power of two greater than }30\epsilon^{-1}\log L.
\end{equation}
Formally, for each vertical (resp. horizontal) dissection line $\ell$, we
designate as portals of $\ell$ the points on $\ell$ with $y$-coordinates
(resp. $x$-coordinates) which are integral multiples
of 
\[\frac{L}{\aalpha 2^{\text{depth}(\ell)}}.\]
There are no portals on the sides of root dissection square, the
bounding box.  Since a square at depth $i$ has sidelength $L/2^i$ and is bounded by 4 dissection lines at depth at most $i$, we get:
\begin{lemma} \label{lem:n-square-portals}
  A dissection square has at most $4\aalpha$ portals on its boundary.
\end{lemma}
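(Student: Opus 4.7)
The plan is a short counting argument based on comparing side length to inter-portal spacing, with careful bookkeeping at the corners.

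Fix a dissection square $R$ at depth $i$, so $R$ has side length $L/2^i$. By the parenthetical remark just before the basic notation paragraph, the four sides of $R$ lie on dissection lines whose depths are all at most $i$ (two are depth-$i$ lines and two are depth $<i$). I will bound, for each side, the number of portal points that lie on it, and then sum.

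Consider a side $\sigma$ of $R$ lying on a dissection line $\ell$ of depth $d\le i$. By the definition of portals, consecutive portals on $\ell$ are separated by exactly $L/(\aalpha\,2^d)$, which is at least $L/(\aalpha\,2^i)$ because $d\le i$. Since $\sigma$ has length $L/2^i$, the number of portal points on $\sigma$ is at most
\[
\frac{L/2^i}{L/(\aalpha\,2^i)}+1 \;=\; \aalpha+1.
\]
To avoid double-counting the corners of $R$ (which are shared between two sides), I will partition the boundary of $R$ into four half-open sides; with this convention, each half-open side contributes at most $\aalpha$ portals, since a half-open interval of length $L/2^i$ contains at most $\aalpha$ points spaced by $L/(\aalpha\,2^i)$ or more.

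Summing over the four sides yields at most $4\aalpha$ portals on the boundary of $R$, which is the claimed bound. There is no real obstacle here; the only subtlety is the half-open convention needed to avoid counting each corner twice, and to make sure that portals landing exactly on a corner of $R$ are charged to exactly one of its two incident sides.
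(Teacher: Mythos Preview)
Your argument is correct and is essentially the same as the paper's: the paper's entire proof is the single sentence preceding the lemma (``Since a square at depth $i$ has sidelength $L/2^i$ and is bounded by 4 dissection lines at depth at most $i$, we get:''), and you have simply fleshed out that counting argument, including the half-open convention to handle corners cleanly.
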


Consider perpendicular dissection lines $\ell$ and $\ell'$.  A portal $p$ of $\ell$ may happen to be a point of $\ell'$ (namely, the intersection point), but $p$ may not be a portal of $\ell'$, that is, it may not be one of the points of $\ell'$ that were designated according to the above definition.

The following lemma will be useful in Subsection~\ref{subsection:establishing-the-portal-property} for technical reasons.

\begin{lemma} \label{lem:corner-portals} For every dissection square
  $R$, the corners of $R$ are portals (except for the points that are
  corners of the bounding box).
\end{lemma}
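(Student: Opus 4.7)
The plan is to set up coordinates with the bounding box equal to $[x_0, x_0+L]\times [y_0, y_0+L]$ and exploit the 2-adic structure of the dissection. A dissection square $R$ at depth $d$ has the form $[x_0+aL/2^d, x_0+(a+1)L/2^d] \times [y_0+bL/2^d, y_0+(b+1)L/2^d]$ for integers $0 \le a,b < 2^d$. A routine induction on the dissection shows that for $0 < a' < 2^d$, the vertical line $x = x_0 + a' L/2^d$ is a dissection line of depth $d - v(a')$, where $v(\cdot)$ denotes the 2-adic valuation (the largest $k$ with $2^k \mid a'$); the analogous statement holds for horizontal lines. The depths of the four bounding lines of $R$ thus read off directly from the corner indices.

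Next I would fix a corner $c = (x_0 + a' L/2^d, y_0 + b' L/2^d)$ of $R$ with $a'\in\{a,a+1\}$ and $b'\in\{b,b+1\}$, and assume $c$ is not a corner of the bounding box. If $c$ lies on a side of the root (i.e., exactly one of $a', b'$ equals $0$ or $2^d$), then $c$ lies on a single dissection line, and its coordinate along that line (measured from $x_0$ or $y_0$) is either $0$ or $L$; both are trivially integer multiples of the portal spacing $L/(\aalpha\, 2^{\depth(\ell)})$, so $c$ is a portal of that line. Otherwise $c$ lies on two dissection lines, vertical of depth $i_v = d - v(a')$ and horizontal of depth $i_h = d - v(b')$, and without loss of generality $i_v \le i_h$. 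The key calculation is to show $c$ is a portal of the deeper (horizontal) line: the condition reduces to requiring that $a' \aalpha / 2^{v(b')}$ be an integer. Since $i_v \le i_h$ gives $v(a') \ge v(b')$, we have $2^{v(b')} \mid a'$, and $\aalpha$ is an integer, so the condition holds.

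The main subtlety --- and essentially the only one --- is recognizing that the lemma does not claim $c$ is a portal of both dissection lines through it: when $i_v < i_h$, the coarser line generally has portal spacing too large to contain $c$, as anticipated by the authors' remark right before the lemma. The whole argument relies on choosing the \emph{deeper} of the two lines and on the fact that both the dissection grid and the denominator $\aalpha\, 2^{\depth}$ in the portal spacing are powers of two, which is precisely what makes the 2-adic divisibility line up.
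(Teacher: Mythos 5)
Your proof is correct, and it takes a noticeably different route from the paper's. The paper argues structurally: it considers a dissection square $R$ at depth $i$, looks at the two depth-$(i{+}1)$ lines $\ell,\ell'$ that subdivide $R$, and notes that the segment $\ell\cap R$ has length $L/2^i$ (a power of $2$) while the portal spacing on $\ell$ is $L/(2^{i+1}\aalpha)$ (also a power of $2$, and a $1/(2\aalpha)$ fraction of that length), so the endpoints and midpoint of $\ell\cap R$ land on portals. In other words, the paper shows that the five ``new'' corners created when $R$ is subdivided --- the side midpoints and the center --- are portals of the just-introduced lines. The lemma then follows because every corner of every dissection square other than the four bounding-box corners arises in exactly this way at some level, though the paper leaves that reduction implicit. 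Your argument instead sets up explicit $2$-adic coordinates, computes the depth of each bounding line of $R$ as $d - v(\cdot)$, and verifies the divisibility $2^{v(b')} \mid a'\aalpha$ directly.

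The two proofs rest on the same underlying fact (everything in sight is a power of $2$, so the divisibilities line up), but yours has two advantages worth noting. First, it closes the gap the paper leaves: you work directly with an arbitrary corner of $R$ rather than with the midpoints/center of $R$, so no ``every corner was created as a new midpoint at some level'' step is needed. Second, and more importantly, you make precise \emph{which} dissection line through the corner $c$ it is a portal of, namely the deeper of the two, and you observe (correctly) that $c$ can fail to be a portal of the shallower line. This is exactly the statement used later in \textsc{SatisfyPortal}, where $\ell$ is the deeper line and $\ell'$ the shallower, so your more careful formulation matches the application better than the lemma's literal wording does. One small presentational caveat: both your proof and the paper's tacitly read ``integral multiples of the portal spacing'' as measured from the lower-left corner of the bounding box (i.e., with $(x_0,y_0)$ as origin); you make this explicit, which is a good call since the claim is false under a literal absolute-coordinates reading.
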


\begin{proof} Consider a square $R$ at depth $i$.  
Consider the two dissection lines that divide $R$ into 4 $\ell$ and $\ell'$. 
The depth of these lines is $i+1$.  These lines restricted to $R$, namely 
$\ell_R = \ell \cap R$ and $\ell_R' = \ell' \cap R$, have length $L/2^i$, a power of 2.  
Portals are designated as integral multiples of $L/(2^{i+1} \aalpha)$, also a power of 2 and a $1/2\aalpha$ 
fraction of the length of $\ell_R$ and $\ell_R'$.  
It follows that the endpoints and intersection point of $\ell_R$ and $\ell_R'$ are portals of these lines.  
\end{proof}

\subsection{Solve via dynamic programming}\label{subsection:step5}\label{section:dynamicprogram}

In order to overcome the computational difficulty associated with maintaining feasibility (as illustrated in Figure~\ref{fig:prop5}), we divide each dissection square $R$ into a regular
$\ssigma \times \ssigma$ grid of {\em cells}; $\ssigma$, which will be
defined later, is $O(1/\epsilon)$ and is a power of 2.  Each {\em
  cell} of the grid is either coincident with a dissection square or
is smaller than the leaf dissection squares.  Consider parent and child dissection squares $R_P$ and $R_C$; a cell $C$ of $R_p$ encloses four cells of $R_C$.

The dynamic programming table for a dissection square $R$ will be indexed by two subpartitions (partition of a subset) of the portals and cells of $R$; one subpartition will encode the connectivity achieved by a solution within $R$ and the other will encode the connectivity required by the solution outside $R$ in order to achieve feasibility.  The details are given in the next section.

\section{The Dynamic Program} \label{sec:DP}

\subsection{The dynamic programming algorithm}

The dynamic program will only encode subsolutions that have low complexity and permit feasibility.  We call such subsolutions {\em conforming}.  We build a dynamic programming table for each dissection square.  The table is indexed by {\em valid configurations} and the entry will be the best {\em compatible} conforming subsolution.

\subsubsection*{Low complexity and feasible: conforming subsolutions}
Let $R$ be a dissection square or a cell, and let $F$ be a finite
number of line segments of $R$.   
We say that $F$ {\em
  conforms} to $R$ if it satisfies the following properties:
\begin{itemize}
\item ({\em boundary property}) $|F \cap \partial R| \le 4(\rrho+1)$.
\item ({\em portal property}) Every connected component of
  $F\cap \partial R$ contains a portal of $R$.
\item ({\em cell property}) Each cell $C$ of $R$ intersects at most
  one connected component of $F$ that also intersects $\partial R$.
\item ({\em terminal property}) If a terminal $t \in R$ is not connected to its mate by $F$ then it is connected to $\partial R$ by $F$.
\end{itemize}
The constant $\rrho$ is defined in Equation~(\ref{eq:rho}) and is
$O(1/\epsilon)$.  Note that the first three properties are those that
bound the complexity of the allowed solutions and the last guarantees
feasibility.  We say that a solution $F$ recursively conforms to $R$ if
it conforms to all descendents dissection squares of $R$ (including
$R$).  We say that a solution $F$ is conforming if it recursively 
conforms to the root dissection square with every terminal connected
to its mate.
It is a trivial corollary of the last property that a conforming
solution is a feasible solution to the Steiner forest problem.  We
will restate and prove the following in Section~\ref{sec:struct}; the
remainder of this section will give a dynamic program that finds a
conforming solution.
\begin{theorem}[Structure Theorem]\label{thm:structure} 
  There is a conforming solution that has, in expectation over the random
  shift of the bounding box, length at most $(1+\frac \epsilon 4)\OPT$.
\end{theorem}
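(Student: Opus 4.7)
The plan is to begin with the near-optimal solution $F$ of Lemma~\ref{lem:sum-of-crossings}, which already satisfies $\leng(F)\le(1+\tfrac{\epsilon}{10})\OPT$ and the crossing bound $\sum_{\ell}|F\cap\ell|\le 3\OPT$, and then apply four successive modifications, one to establish each of the four conforming properties. The goal is to show that the expected added length (over the random shift of the enclosing box) of each modification is $O(\epsilon)\OPT$.

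The terminal property is essentially automatic: since $F$ is feasible, any terminal $t\in R$ whose mate lies outside $R$ is already connected in $F$ to some point of $\partial R$ along the path to its mate, and none of the later steps removes a segment. The portal property is established by Arora's standard portal-shifting argument: each crossing of $F$ with a dissection line $\ell$ is slid to the nearest portal of $\ell$, at cost at most the inter-portal spacing $L/(\aalpha 2^{\depth(\ell)})$. By~\eqref{eq:prob}, the expected spacing equals
\[
\sum_{i=1}^{\log L}\frac{2^i}{L}\cdot\frac{L}{\aalpha 2^i}=\frac{\log L}{\aalpha}
\]
per crossing, so the total expected overhead is at most $3\OPT\cdot(\log L)/\aalpha\le\tfrac{\epsilon}{10}\OPT$ by~\eqref{eq:inter-portal-distance}. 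The boundary property follows from an Arora-Mitchell-style patching argument: whenever some side of a dissection square is crossed more than $\rrho+1$ times, replace the excess crossings by a short route along that side and charge the added length to the eliminated crossings. Applying~\eqref{eq:prob} once more and choosing $\rrho=\Theta(1/\epsilon)$ keeps the expected cost at $O(\epsilon)\OPT$.

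The main obstacle is the cell property, which has no analogue in Arora's scheme. For each dissection square $R$ and each cell $C$ of $R$ I must fuse all components of $F\cap R$ meeting both $C$ and $\partial R$ into a single component inside $C$. The plan is to add short merging segments within $C$, each of length $O(L/(2^{\depth(R)}\ssigma))$, i.e.\ at most a cell diameter. The delicate part is the charging scheme: each added merger must be charged to crossings of $F$ with cell or dissection boundaries in a way that guarantees no crossing is charged more than $O(1)$ times across all cells and all levels of the dissection. Taking expectation over the random shift via~\eqref{eq:prob} introduces an extra $\log L$ from the depth factor, which is absorbed by choosing the cell parameter $\ssigma=\Theta(1/\epsilon)$, yielding an expected overhead of $O(\epsilon)\OPT$.

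Summing the four contributions with the $\tfrac{\epsilon}{10}\OPT$ slack already present in $\leng(F)$, and tuning the hidden constants in $\aalpha,\rrho,\ssigma$, gives the claimed expected bound of $(1+\tfrac{\epsilon}{4})\OPT$. The heart of the proof is the cell-property step: carefully designing the within-cell merging procedure so that its cost admits a clean, level-independent charging argument is precisely what enables the subsequent dynamic program to index over subpartitions of the $O(1)$ cells of each square rather than over subpartitions of the terminals themselves.
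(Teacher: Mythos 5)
Your overall strategy matches the paper's: start from the $F_0$ of Lemma~\ref{lem:sum-of-crossings}, observe that the terminal property is automatic from feasibility, and then apply three augmentation steps to establish the boundary, portal, and cell properties, charging each to the crossing bound $\sum_\ell |F_0 \cap \ell| \le 3\OPT$ via the depth-distribution identity~\eqref{eq:prob}. Your treatment of the portal and boundary steps is a fair sketch of what the paper does with {\sc SatisfyPortal} and {\sc SatisfyBoundary}, though you do not mention that each later augmentation must be shown not to destroy the properties established earlier (the paper does this explicitly, e.g.\ {\sc SatisfyPortal} preserves the boundary property and {\sc SatisfyCell} preserves both, which requires the fact that corners are portals, Lemma~\ref{lem:corner-portals}).

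The genuine gap is in the cell property, which you yourself flag as ``the heart of the proof'' and then leave as an unproved claim. Saying ``each added merger must be charged to crossings of $F$ with cell or dissection boundaries so that no crossing is charged more than $O(1)$ times'' is a statement of the goal, not an argument; this is exactly the nontrivial step that has no Arora analogue. The paper's solution is to add a specific subset $A(C,F_3)$ of $\partial C$ (not arbitrary ``short merging segments within $C$''), and then to build an injective map $h$ from augmented cells to components of $F_0$ via an auxiliary directed forest $H$: each augmentation corresponds to two components of the current solution meeting both the cell interior and $\partial R$, each containing a component of $F_0$ that hits $\partial R$ (this requires a separate nesting argument), an edge of $H$ is drawn between those $F_0$-components, and rooting/orienting $H$ gives in-degree at most one, hence injectivity. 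The injectivity is what yields the per-line bound $\sum_j C_{\ell,j} \le |F_0 \cap \ell|$; without it, or some equivalent device, you have no control over how many times a single crossing is charged across depths and cells, and the $O(\epsilon)\OPT$ bound does not follow. Your proposal names the problem correctly but does not solve it.
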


\subsubsection*{Indices of the dynamic programming table: valid configurations} 
The dynamic programming table $\dpt_R$ for a dissection square $R$
will be indexed by subpartitions of the portals and cells of $R$ that
we call {\em configurations}.  A {\em configuration of $R$} is a pair
$(\piin, \piout)$ with the following properties: $\piin$ is a
subpartition of the cells and portals of $R$ such that each part
contains at least one portal and at least one cell; $\piout$ is a
coarsening of $\piin$.  See Figure~\ref{fig:config}. $\piin$ will
characterize the behaviour of the solution inside $R$ while $\piout$
will encode what connections remain to be made in order to make the
solution feasible. For a terminal $t\in R$, we use $C_R[t]$ to denote
the cell of $R$ that contains $t$.  We say a configuration is {\em
  valid} if it has the following properties:
\begin{itemize} 
\item ({\em compact}) $\piin$ has at most $4(\rrho+1)$ parts and contains at most $4(\rrho+1)$ portals.
\item ({\em connecting}) For every terminal $t$ in $R$ whose mate is
  not in $R$, $C_R[t]$ is in a part of $\piin$.  For every pair of
  mated terminals $t,t'$ in $R$, either $C_R[t]$ and $C_R[t']$ are in
  the same part of $\piout$ or neither $C_R[t]$ nor $C_R[t']$ are in
  $\piin$.
\end{itemize}
The connecting property will allow us to encode and guarantee feasible
solutions.  Since a dissection square has $4\aalpha$ portals
(Lemma~\ref{lem:n-square-portals}) and $\ssigma^2$ cells, the first
property bounds the number of configurations:
\begin{lemma}\label{lem:n-configs}
  There are at most $(4\aalpha+\ssigma^2)^{O(\rrho)}$ or
  $(\epsilon^{-2}\log n)^{O(1/\epsilon)}$ compact configurations of a
  dissection square.
\end{lemma}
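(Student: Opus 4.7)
The plan is a direct enumeration of compact configurations, split into the choice of the inner subpartition $\piin$ and then, given $\piin$, the choice of its coarsening $\piout$.

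First I would count $\piin$. By compactness, $\piin$ has at most $k := 4(\rrho+1)$ parts and contains at most $O(\rrho)$ atoms (portals and cells) in total; the portals are drawn from the at most $4\aalpha$ portals of $\partial R$ (Lemma~\ref{lem:n-square-portals}), and the cells are drawn from the $\ssigma^2$ cells of $R$. So $\piin$ is determined by picking a sequence of at most $O(\rrho)$ items from the $4\aalpha+\ssigma^2$ available and labeling each with a part index in $\{1,\ldots,k\}$, giving at most $(4\aalpha+\ssigma^2)^{O(\rrho)} \cdot k^{O(\rrho)} = (4\aalpha+\ssigma^2)^{O(\rrho)}$ choices. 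Next, given $\piin$, the coarsening $\piout$ corresponds to a partition of the at most $k$ parts of $\piin$, so there are at most $k^k = O(\rrho)^{O(\rrho)}$ choices, which is absorbed into the bound for $\piin$. This yields the first form of the claim.

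The second form is routine substitution. By \eqref{eq:L}, $L$ is polynomial in $n$, so $\aalpha = O(\epsilon^{-1}\log L) = O(\epsilon^{-1}\log n)$; also $\ssigma^2 = O(\epsilon^{-2})$ and $\rrho = O(\epsilon^{-1})$. Thus $4\aalpha+\ssigma^2 = O(\epsilon^{-2}\log n)$ and the exponent is $O(1/\epsilon)$, giving $(\epsilon^{-2}\log n)^{O(1/\epsilon)}$.

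The one subtle point I anticipate is the implicit bound on the number of cells in $\piin$. The compactness property as stated explicitly bounds only the portals in $\piin$, but the claimed bound requires that the cells, too, contribute only $O(\rrho)$ atoms. This is consistent with viewing each part of $\piin$ as a single connected component of a conforming subsolution's interaction with $\partial R$—whose total complexity is already $O(\rrho)$ via the boundary and portal properties—and is the only place the enumeration differs from a purely mechanical counting argument.
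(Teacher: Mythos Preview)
Your counting mirrors the paper's one-sentence justification (the sentence immediately preceding the lemma is the paper's entire argument), and your second-form substitution is correct. The issue you flag in your last paragraph, however, is a genuine gap, not a subtlety that can be waved away.

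Compactness bounds only the number of \emph{portals} in $\piin$ and the number of parts; it says nothing about cells. In a compatible configuration a single component of $F$ touching $\partial R$ may pass through many---up to all $\ssigma^2$---cells, and every such cell belongs to that component's part of $\piin$. Hence the honest enumeration is: at most $(4\aalpha)^{O(\rrho)}$ choices for which portals appear and how they are grouped into parts, times $(4(\rrho+1)+1)^{\ssigma^2}$ choices for assigning each of the $\ssigma^2$ cells to one of the parts or to none, times $\rrho^{O(\rrho)}$ for the coarsening $\piout$. Since $\ssigma^2=\Theta(\epsilon^{-2})$ while $\rrho=\Theta(\epsilon^{-1})$, the cell factor is $(1/\epsilon)^{\Theta(\epsilon^{-2})}$, which is \emph{not} absorbed by $(4\aalpha+\ssigma^2)^{O(\rrho)}$: the latter contributes only $(1/\epsilon)^{O(1/\epsilon)}$ on the $\epsilon$ side. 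Your appeal to the boundary and portal properties does not help here, since those limit how many components cross $\partial R$, not how many cells a single component may visit inside $R$.

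This does not harm the algorithm. For fixed $\epsilon$ the cell factor is a constant, so the number of configurations is still $(\log n)^{O(1/\epsilon)}$, i.e.\ polylogarithmic in $n$, which is all that Section~\ref{sec:run-time} uses. But neither your argument nor the paper's sentence actually establishes the precise first form $(4\aalpha+\ssigma^2)^{O(\rrho)}$ under the stated definition of ``compact''; the bound one can support from the definitions is $(4\aalpha)^{O(\rrho)}\cdot \rrho^{\,O(\ssigma^2)}$, equivalently $(\log n)^{O(1/\epsilon)}\cdot (1/\epsilon)^{O(1/\epsilon^2)}$.
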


We will use the following notation to work with configurations:
For a subpartition $\pi$ of $S$ and an element $x\in S$, we use
$\pi[x]$ to denote the part of $\pi$ containing $x$ if there is one,
and $\emptyset$ otherwise. For two subpartitions $\pi$
and $\pi'$ of a set $S$, we use $\pi \vee \pi'$ to denote the finest
possible coarsening of the union of $\pi$ and $\pi'$.  If we eliminate
the elements that are in partition $\pi'$ but not in partition $\pi$,
$\pi \vee \pi'$ is a coarsening of $\pi'$ and vice versa.

\begin{figure}[h]
  \centering
  \resizebox{6cm}{!}{\input 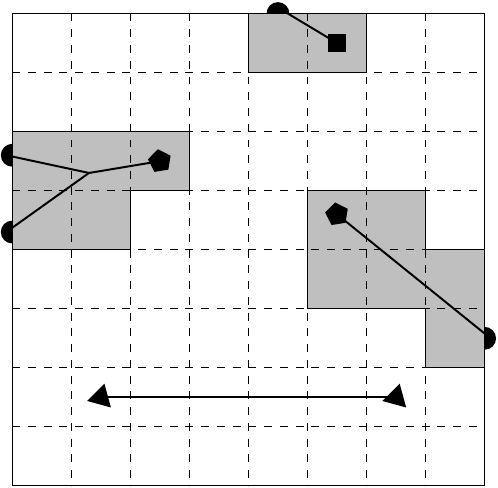_t}
  \caption{A dissection square and cells (grid), terminal pairs
    (triangles and pentagons) and unmated terminal (square), and
    subsolution (dark lines).  The grey components give the parts of
    $\piin$ with portals (half-disks).  To be a valid configuration,
    the two parts containing the pentagon terminals must be in the
    same part of $\piout$.  The subsolution conforms to $R$ and is
    compatible with $(\piin,\piout)$.  }
  \label{fig:config}
\end{figure}

\subsubsection*{Entries of the dynamic programming table: compatible subsolutions} The entries of
the dynamic programming table for dissection square $R$ are compatible
subsolutions, subsolutions that satisfy.  Formally, a subsolution $F$
and configuration $(\piin, \piout)$ of $R$ are {\em compatible} if and
only if $\piin$ has one part for every connected component of $F$ that
intersects $\partial R$ and that part consists of the cells and
portals of $R$ intersected by that connected component (Figure~\ref{fig:config}).  Note that as
a result, some valid configurations will not have a compatible
subsolution: if a part of $\piin$ contains disconnected cells with
terminals inside, then no set of line segments can connect these
terminals and be contained by the cells of that part.  The entries
corresponding to such configurations will indicate this with $\infty$.
 \begin{observation}\label{obs:conform}
  If $F$ conforms to $R$ then $(\piin, \piout)$ is a
  valid configuration.
\end{observation}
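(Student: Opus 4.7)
The plan is to verify the two conditions defining validity—compactness and the connecting property—for the configuration $(\piin,\piout)$ that $F$ naturally induces. I first specify this induced configuration. Let $K_1,\ldots,K_k$ be the connected components of $F\cap R$ that meet $\partial R$. Define the part of $\piin$ corresponding to $K_j$ to be the set of cells of $R$ that $K_j$ intersects together with the portals of $R$ contained in $K_j$; this is well-defined because the cell property ensures no cell lies in two different parts. Then $\piout$ is the coarsening of $\piin$ obtained by merging parts whose components $K_j$ lie in a common connected component of the global solution $F$ (so this step implicitly views $F$ as a conforming solution of the whole instance, which is the setting in which $\piout$ is meaningful).

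For compactness there are two counts to bound. The number of parts of $\piin$ is at most the number $k$ of components of $F\cap R$ that hit the boundary; each such component contributes at least one component to $F\cap\partial R$, so $k\le|F\cap\partial R|\le 4(\rrho+1)$ by the boundary property. The portal count is bounded similarly: since $F$ is only permitted to cross dissection lines at portals, every maximal boundary component of $F\cap\partial R$ contains exactly one portal (the portal property guarantees at least one, and the crossing restriction together with the fact that portals are isolated grid points prevents more), so the total number of portals appearing in $\piin$ is at most $|F\cap\partial R|\le 4(\rrho+1)$. Each part automatically contains at least one portal (portal property) and at least one cell (the component is contained in $R$ and so occupies some cell).

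For the connecting property, consider a terminal $t\in R$ whose mate $t'$ is not in $R$. Since $F$ is a conforming global solution it connects $t$ to $t'$, so the component of $F$ through $t$ must exit $R$ and therefore meets $\partial R$; its restriction to $R$ contains $t$, hence (by the cell property forcing all boundary-touching intersections with the cell $C_R[t]$ into a single $K_j$) the cell $C_R[t]$ is in the part of $\piin$ corresponding to that $K_j$. Now suppose $t,t'\in R$ are mated. Either the path in $F$ joining them stays inside $R$, in which case its component does not meet $\partial R$, so neither $C_R[t]$ nor $C_R[t']$ appears in $\piin$; or it exits $R$, in which case $C_R[t]$ and $C_R[t']$ each lie in some part of $\piin$ (possibly different parts, each coming from a different $K_j$), and because both parts' components belong to the same global component of $F$, they are merged in $\piout$.

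I expect every step to be short: the counting bounds are immediate from the boundary and portal properties, and the connecting property is a direct translation of the terminal property through the cell property. The only delicate point is conceptual rather than technical, namely clarifying that $\piout$ is read off from the \emph{global} connectivity of $F$ and not from $F\cap R$ alone; once this is pinned down, the observation follows.
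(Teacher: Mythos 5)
The paper does not supply a proof of this observation—it is presented as an immediate consequence of the definitions—so there is no paper proof to compare against; your proposal is essentially the missing unwinding of the definitions, and its overall plan (verify compactness, then the connecting property, for the compatible configuration) is exactly what one would expect.

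Most of the details are correct, but one step is not justified by the definitions you are given. To bound the total number of portals appearing in $\piin$ by $4(\rrho+1)$, you assert that every connected component of $F\cap\partial R$ contains \emph{exactly} one portal, citing ``the crossing restriction together with the fact that portals are isolated grid points.'' Two problems here. First, a ``cross only at portals'' restriction is not part of the definition of conformance; conformance asks only for the boundary, portal, cell, and terminal properties, and the portal property merely says each component of $F\cap\partial R$ contains \emph{at least} one portal. Second, even if $F$ crosses $\partial R$ only at portals, a component of $F\cap\partial R$ can be a line segment lying along a side of $R$ (this is precisely the kind of thing that {\sc SatisfyBoundary} and {\sc SatisfyPortal} introduce), and such a segment can pass through several consecutive portals; indeed the inter-portal spacing on a dissection line at depth $d$ is $L/(\aalpha 2^d)$, which is less than $1$ near the bottom of the tree, so portals are not sparser than the unit grid. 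So ``exactly one portal per boundary component'' does not follow and the portal count bound is not established by your argument. (The number-of-parts bound, on the other hand, is fine: the $K_j$ are pairwise disjoint, each contributes at least one component to $F\cap\partial R$, and the boundary property caps that at $4(\rrho+1)$.)

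Your discussion of the connecting property and the observation that $\piout$ must be read off from the global solution rather than from $F\cap R$ alone is correct and is a worthwhile clarification that the paper leaves implicit. One phrasing nit: the terminal property of conformance is stated for the subsolution inside $R$, so you should invoke it directly (``$t$'s mate is not in $R$, hence $F$ does not connect $t$ to its mate, hence $F$ connects $t$ to $\partial R$'') rather than appealing to ``$F$ is a conforming global solution.''
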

As is customary, our dynamic program finds the {\em value} of the
solution; it is straightforward to augment the program so that the
solution itself can be obtained.  Our procedure for filling the
dynamic programming tables, {\sc populate}, will satisfy the following
theorem:
\begin{theorem} \label{thm:correct} {\sc populate}$(R)$ returns a
  table $\dpt_R$ such that, for each valid configuration
  $(\piin,\piout)$ of $R$, $\dpt_R[\piin, \piout]$ is the minimum
  length of subsolution that recursively conforms to $R$ and that is
  compatible with $(\piin, \piout)$.
\end{theorem}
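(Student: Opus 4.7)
The plan is to prove this by induction on the height of $R$ in the dissection tree, with {\sc populate} working bottom-up. For the base case ($R$ a leaf dissection square), $R$ contains at most $\ssigma^2$ cells, $O(\aalpha)$ portals, at most one terminal location, and by the boundary property at most $4(\rrho+1)$ boundary crossings, so a compatible conforming subsolution, if any exists, can be chosen from a finite family of candidates---for instance, subsolutions whose segments have endpoints among the cells' corners, the portals, and the terminal of $R$. Direct enumeration fills $\dpt_R$ correctly.

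For the inductive step, $R$ has four children $R_1, R_2, R_3, R_4$ and, by induction, {\sc populate}$(R_i)$ has correctly filled $\dpt_{R_i}$. The procedure {\sc populate}$(R)$ enumerates all $4$-tuples of valid configurations $((\piin_{R_i}, \piout_{R_i}))_{i=1}^{4}$, tests each tuple for \emph{consistency} with the target $(\piin_R, \piout_R)$, and sets $\dpt_R[\piin_R, \piout_R]$ to the minimum of $\sum_i \dpt_{R_i}[\piin_{R_i}, \piout_{R_i}]$ over consistent tuples (or $\infty$ if none exist). A tuple is consistent when: each portal shared by two children $R_i,R_j$ lies in parts of both $\piin_{R_i}$ and $\piin_{R_j}$ or in neither; the coarsening $\piin_{R_1} \vee \piin_{R_2} \vee \piin_{R_3} \vee \piin_{R_4}$, obtained by identifying shared internal portals and by grouping the four child cells comprising each parent cell, restricts on the cells and portals of $R$ to exactly $\piin_R$; and for each $i$, $\piout_{R_i}$ is exactly the coarsening of $\piin_{R_i}$ induced, via shared internal portals, by $\piout_R \vee \bigvee_{j \neq i} \piin_{R_j}$.

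Correctness reduces to two inclusions. Forward: given an $F$ that recursively conforms to $R$ and is compatible with $(\piin_R,\piout_R)$, restrict to $F \cap R_i$, which recursively conforms to $R_i$ and is compatible with a unique valid $(\piin_{R_i}, \piout_{R_i})$; the resulting tuple is consistent with $(\piin_R,\piout_R)$, so by induction $\dpt_{R_i}[\piin_{R_i}, \piout_{R_i}] \le \leng(F \cap R_i)$, and the minimum over consistent tuples is at most $\leng(F)$. Reverse: if each $F_i$ is compatible with $(\piin_{R_i}, \piout_{R_i})$ and the tuple is consistent, then $F := F_1 \cup F_2 \cup F_3 \cup F_4$ recursively conforms to $R$ (the boundary, portal, cell, and terminal properties at $R$ follow from consistency together with the compact and connecting properties of the child configurations; recursion at descendants is inherited from the $F_i$) and is compatible with $(\piin_R, \piout_R)$.

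The main obstacle is defining consistency so that $\piout_{R_i}$ captures \emph{exactly} the coarsening of $\piin_{R_i}$ that is realized outside $R_i$ by the combination of $\piout_R$ and the sibling subsolutions. The subtlety is that a connection between two parts of $\piin_{R_i}$ may be realized entirely outside $R$, or by traveling through a sibling whose $\piin_{R_j}$ unites the two portals where the parts exit $R_i$, or by a chain alternating between the two types of steps. The join $\piout_R \vee \bigvee_{j \neq i} \piin_{R_j}$, followed by restriction to the portals of $R_i$ that are shared with siblings or lie on $\partial R$, captures all such chains simultaneously, and the terminal property of $R$ then follows from the connecting property of each child's configuration.
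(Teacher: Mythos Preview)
Your plan mirrors the paper's proof: bottom-up induction, with soundness (your ``reverse'' inclusion) and completeness (your ``forward'' inclusion), combining child tables through a consistency relation among the five configurations. The base case and the forward direction are handled as in the paper.

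The gap is in your consistency definition. You give three clauses (shared-portal agreement, the internal condition on $\piin_R$, and the external condition determining each $\piout_{R_i}$), but the paper's definition carries a further \emph{terminal} clause you omit: for mates $t\in R_i$ and $t'\in R_j$ with $i\neq j$, consistency must also require that either $C_{R_i}[t]$ and $C_{R_j}[t']$ lie in the same part of $\bigvee_k \piin_{R_k}$, or $C_R[t]$ and $C_R[t']$ lie in the same nonempty part of $\piout_R$. Without it the reverse direction fails. For a counterexample, let $F_1$ connect $t$ only to an internal portal $p$ shared with $R_3$, let $F_2$ connect $t'$ only to an internal portal $p'$ shared with $R_4$, and let $F_3,F_4$ each have a single component touching $p$ (respectively $p'$) but never meeting each other or $\partial R$. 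Then $F=\bigcup_i F_i$ leaves $t$ and $t'$ in distinct components, neither reaching $\partial R$, so $F$ violates the terminal property at $R$. Yet all five configurations are valid---in particular $(\piin_R,\piout_R)=(\emptyset,\emptyset)$ passes the connecting test because neither $C_R[t]$ nor $C_R[t']$ is in $\piin_R$---and your three consistency clauses hold (once parts without $R$-portals are dropped, as they must be for the internal clause to be usable at all). Your assertion that ``the terminal property of $R$ then follows from the connecting property of each child's configuration'' is therefore incorrect: the child connecting property only forces $t$ to reach $\partial R_i$, not $\partial R$ or its mate. The paper adds this terminal clause explicitly and invokes it in the soundness argument.
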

We prove this theorem in Section~\ref{sec:dp-correct}.

\subsubsection*{Consistent configurations} 
A key step of the dynamic program is to correctly match up the subsolutions of the child dissection squares $R_1, \ldots, R_4$ of $R_0$.  Consider valid configurations $(\piin_i, \piout_i)$ for $i=0,\ldots,4$ and let $\pi^\vee_0 = \bigvee_{i=1}^4 \piin_i$.  We say that the configurations $(\piin_i, \piout_i)$ for $i=0,\ldots,4$ are {\em consistent} if they satisfy the following connectivity requirements:
\begin{enumerate}
\item ({\em internal}) 
$\piin_0$ is given by $\pi^\vee_0$ with portals of $R_i$ that are not portals of $R_0$ removed, parts that do not contain portals of $R_0$ removed,
 and each cell of $R_i$ replaced by the corresponding (parent) cell of $R_0$.
(If non-disjoint parts result from replacing cells by their parents, then the result is not a partition and cannot be $\piin_0$.) 
\item ({\em external}) For two elements (cells and/or portals) $x,x'$ of $R_i$, $\piout_i[x] = \piout_i[x']$ if and only if $\pi^\vee_0[x] = \pi^\vee_0[x']$ or there are portals $p,p'$ such that $\piout_0[p] = \piout_0[p']$, $\pi^\vee_0[x] = \pi^\vee_0[p]$, and $\pi^\vee_0[x'] = \pi^\vee_0[p']$.
\item ({\em terminal}) For mated terminals $t\in R_i$ and $t'\in R_j$ with $1 \le i < j \le 4$, either $\pi^\vee_0[C_i[t]] = \pi^\vee_0[C_j[t']]$  or $\piout_0[C_0[t]] = \piout_0[C_0[t']]$.
\end{enumerate}

\subsubsection*{Dynamic programming procedure} We now give the procedure {\sc populate}
that fills the dynamic programming tables.  The top dissection square
$R$ has a single entry, the entry corresponding to the configuration
$(\emptyset, \emptyset)$.  The desired solution is therefore given by $\dpt_R[\emptyset, \emptyset]$ after filling the table $\dpt_R$ with {\sc populate}$(R)$. 
 The corresponding solution is  conforming.  The following procedure is used to populate the entries of $\dpt_{R_0}$.  The procedure is well defined when the tables are filled for dissection squares in bottom-up order.

\begin{tabbing}
  {\sc populate}$(R_0)$\\
  \qquad\=\qquad\=\qquad\=\qquad\=\qquad\=\qquad\=\hspace{5cm}\=\\
  \> If $R_0$ contains at most one terminal, then \>\>\>\>\>\> {\em \% $R_0$ is a leaf dissection square}\\
  \>\> For every valid configuration $(\piin, \piout)$ of $R_0$,\\
  \>\>\> $\dpt_{R_0}[\piin,\piout] := 0$ \\
  \>\>\> For every part $P$ of $\piin$,\\
  \>\>\>\> if the cells of $P$ are connected and contain the portals (and terminal) of $P$, \\
  \>\>\>\>\> $F_P := $ \=minimum-length set of lines in the cells of $P$ that\\
  \>\>\>\>\>\> connects the portals in $P$ (and terminal, if in $P$), \\
  \>\>\>\> \>$\dpt_{R_0}[\piin,\piout] := \dpt_{R_0}[\piin,\piout] + \leng(F_P)$;\\
  \>\>\>\> otherwise, $\dpt_{R_0}[\piin,\piout] := \infty$. \>\>\> {\em \% no subsolution conforms to $\piin,\piout$}\\
  \\
  \> Otherwise, \>\>\>\>\>\>  {\em \% $R_0$ is a non-leaf dissection square}\\
  \>\>let $R_1, R_2, R_4, R_4$ denote the children of $R_0$.\\
\>\>For every valid configuration $(\piin_0,\piout_0)$ of $R_0$, initialize $\dpt_{R_0}(\piin_0,\piout_0):= \infty$.  \\
\>\>For every quintuple of indices $\left\{(\piin_i,\piout_i) \right\}_{i=0}^4$ to $\{\dpt_{R_i} \}_{i=0}^4$, \\
\>\>\> if $\left\{(\piin_i,\piout_i) \right\}_{i=0}^4$ are consistent,\\
\>\>\>\> $\dpt_{R_0}[\piin_0,\piout_0] :=\min \left\{ \dpt_{R_0}[\piin_0,\piout_0] , \sum_{i=1}^4 \dpt_{R_i}[\piin_i,\piout_i]\right\}$.\\
\end{tabbing}

\subsection{Running time}\label{sec:run-time}
Since each part of $\piin$ contains $O(\rrho)$ portals (since $\piin$ is compact), $F_P$ is a Steiner tree of $O(\rrho)$ terminals (portals and possibly one terminal) among the cells of $\piin$.  To avoid the cells that are not in $\piin$, we will require at $O(\ssigma^2)$ Steiner points.  $F_P$ can be computed in time proportional to $\ssigma$ and $\rrho$ (which are $O(1/\epsilon)$) by enumeration.
Since the number of compact configurations is polylogarithmic and
since there are $O(n\log n)$ dissection squares, the running time of
the dynamic program is therefore $O(n \log^\xi n)$, where $\xi$ is a
constant depending on $\epsilon$.

\subsection{Correctness (proof of Theorem~\ref{thm:correct})} \label{sec:dp-correct} 

We prove Theorem~\ref{thm:correct}, giving the correctness of our
dynamic program, by bottom-up induction.  In the following, we use the notation, definitions and conditions of {\sc populate}.  The base cases of the induction correspond to dissection squares that contain at most one terminal.
If any part $P$ of $\piin$ contains cells or portals that are disconnected, then there is no subsolution that is compatible with $\piin$ and $\dpt_{R_0}[\piin_0,\piout_0] = \infty$ represents this.  Otherwise 
the subsolution $F_0$ that is given by the union of $\{F_P\ : \ \text{part $P$ of $\piin$}\}$ is compatible with $\piin$ by construction.  Further $F_0$ satisfies the terminal property of conformance with $R_0$ by construction and the remaining properties since it is compatible with a valid conformation.

When $R_0$ contains more than one terminal, for a valid configuration $(\piin, \piout)$ of $R_0$, we must prove:
\begin{description}
\item[Soundness] If $\dpt_{R_0}[\piin_0,\piout_0]$ is finite then there is
  a subsolution $F_0$ that recursively conforms to $R_0$, is compatible with
  $(\piin_0,\piout_0)$ and whose length is $\dpt_{R_0}[\piin_0,\piout_0]$.
\item[Completeness] Any minimal subsolution $F_0$ that recursively conforms to
  $R_0$ and is compatible with $(\piin_0,\piout_0)$ has length at least
  $\dpt_{R_0}[\piin_0,\piout_0]$.
\end{description}
The proof of Theorem~\ref{thm:correct} follows directly from this.  We will use the following lemma:

\begin{lemma} \label{lem:compatible} Let
  $\{(\piin_i,\piout_i)\}_{i=0}^4$ be consistent configurations for
  dissection square $R_0$ and child dissection squares $R_1,\ldots,
  R_4$.  For $i = 1, \ldots, 4$, let $F_1,\ldots,F_4$ be subsolutions
  that recursively conform to $R_i$ and are compatible with
  $(\piin_i,\piout_i)$.  Then $\cup_{i=1}^4 F_i$ recursively conforms
  to $R_0$ and is compatible with $(\piin_0,\piout_0)$.
\end{lemma}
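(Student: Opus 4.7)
The plan is to verify, in turn, the four conformance properties of $F_0 := \bigcup_{i=1}^4 F_i$ for $R_0$ together with the compatibility of $F_0$ with $(\piin_0,\piout_0)$; recursive conformance to proper descendents of $R_0$ is inherited from the $F_i$'s, since inside any such descendent $F_0$ agrees with the unique enclosing $F_i$. The engine throughout will be a structural claim packaging the internal consistency condition: two elements (cells or portals of the children) lie in the same connected component of $F_0$ if and only if they lie in the same part of $\pi^\vee_0 := \bigvee_{i=1}^4 \piin_i$. The ``if'' direction traces how parts of the $\piin_i$'s get merged whenever two children share a portal on their common boundary, mirroring how components of the $F_i$'s get glued along $\partial R_i \cap \partial R_j$; the ``only if'' direction uses compatibility of each $F_i$ with $\piin_i$ to rule out extra physical connections in $F_0$.

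Given the structural claim, compatibility of $F_0$ with $\piin_0$ becomes mechanical: the recipe for $\piin_0$ in the internal consistency condition -- delete portals not on $\partial R_0$, delete parts without an $R_0$-portal, replace each child cell by its parent $R_0$-cell -- lists exactly the connected components of $F_0$ meeting $\partial R_0$, together with the cells and portals of $R_0$ each such component meets. The boundary and portal properties of conformance then follow because $(\piin_0,\piout_0)$ is valid, hence compact, so $\piin_0$ has at most $4(\rrho+1)$ parts each containing at least one portal; the cell property follows because $\piin_0$ is a subpartition, so each $R_0$-cell appears in at most one part, and hence meets at most one boundary-intersecting component of $F_0$.

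The delicate step is the terminal property. Suppose $t \in R_0$ has mate $t'$ and $t,t'$ are not connected in $F_0$. If both lie in the same child $R_i$, the terminal property of $F_i$ with respect to $R_i$ places $t$ on a component of $F_i$ reaching $\partial R_i$; either that component already reaches $\partial R_0$, or I would follow it through further children using the structural claim, ruling out the alternative that $t$ and $t'$ eventually land in a common part of $\pi^\vee_0$ and thus a common component of $F_0$. If $t \in R_i$ and $t' \in R_j$ with $i \ne j$, the terminal consistency condition splits into two cases: $\pi^\vee_0[C_i[t]] = \pi^\vee_0[C_j[t']]$ is ruled out by the structural claim, while $\piout_0[C_0[t]] = \piout_0[C_0[t']]$ forces $C_0[t]$ to appear in $\piin_0$ by the connecting property of the valid configuration $(\piin_0,\piout_0)$, which places $t$ on a component of $F_0$ that meets $\partial R_0$.

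The main obstacle is the bookkeeping inside the structural claim at points of $R_0$'s interior shared by multiple children -- the midpoints of interior dissection edges and especially the center of $R_0$, where four child boundaries meet. Lemma~\ref{lem:corner-portals} is what guarantees these meeting points are portals of every relevant child, so that the gluing prescribed by $\pi^\vee_0$ faithfully reflects the physical connectivity of the line segments of $F_0$ at those corners. A secondary subtlety is matching the two senses of $\pi^\vee_0[\cdot]$ applied to a portal of $R_0$: this is consistent because every portal of $R_0$ lying on $\partial R_i$ is also a portal of $R_i$.
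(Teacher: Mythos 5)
Your plan follows the paper's proof of Lemma~\ref{lem:compatible} quite closely. The heart of both is the observation that the internal-consistency recipe producing $\piin_0$ from $\pi^\vee_0 := \bigvee_{i=1}^4 \piin_i$ exactly describes the boundary-reaching components of $F_0$; you merely package this as an explicit two-sided claim about when cells and portals lie in a common component of $F_0$. (The paper proves in the body of Lemma~\ref{lem:compatible} only the compatibility half and only the ``no extraneous parts enter $P$'' direction, deferring the four conformance properties to the subsequent proof of soundness; you fold these together, which is a cosmetic restructuring.) So the approach is essentially the paper's.

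The one place your sketch, like the paper's, glides over the real difficulty is the ``only if'' direction of your structural claim: that if two child components $K_a\subseteq F_i$ and $K_b\subseteq F_j$ get glued into a single component of $F_0$, then the parts of $\piin_i$ and $\piin_j$ that track them lie in a common part of $\pi^\vee_0$. You attribute this to ``compatibility of each $F_i$ with $\piin_i$,'' but compatibility only records which cells and portals of $R_i$ each component of $F_i$ meets --- it says nothing directly about \emph{where} on $\partial R_i\cap\partial R_j$ components from different children meet. If the meeting point is a portal of the shared dissection line (or a corner, covered by Lemma~\ref{lem:corner-portals}, as you note), the two parts share that portal and the join in $\pi^\vee_0$ happens. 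But a priori $K_a$ and $K_b$ could meet at an interior non-portal point $z$ of the shared line $\ell$: the portal property guarantees only that the component of $K_a\cap\ell$ through $z$ contains \emph{some} portal and that the component of $K_b\cap\ell$ through $z$ contains \emph{some} portal, and these could be distinct, so the two parts of $\piin_i$ and $\piin_j$ need not intersect and your iff would fail. Closing this requires either showing that any such meeting can be traced to a shared portal (using the recursive conformance to the leaf squares along $\ell$), or observing that the lemma is only invoked for subsolutions whose boundary contacts occur at portals. The paper's own proof has the same lacuna --- it asserts ``These parts belong to a part $P$ of $\pi^\vee_0$'' without argument --- so this is a genuine subtlety that neither treatment supplies.
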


\begin{proof}
  Recall that $F_0$ is compatible with $ (\piin_0,\piout_0)$ if
  $\piin_0$ has one part for every connected component of $F_0$ that
  intersects $\partial R_0$ and that part consists of the cells and
  portals intersected by that component.  Consider a component $K$ of
  $F_0$ that intersects $\partial R_0$.  There must be a child
  dissection square $R_i$ with a part of $\piin_i$ that consists of
  the cells and portals intersected by $K \cap R_i$.  Consider all
  such parts $P_j$, $j = 1, \ldots$.  (Note that there may be more
  than one such part from a given child dissection square.)  These
  parts belong to a part $P$ of $\pi^\vee_0$.

  We argue that no other child configuration parts make up $P$.  For a
  contradiction, suppose another part $P'$ is in the make up of $P$.
  Since $(\piin_0,\piout_0)$ is consistent with the child
  configurations, $P'$ cannot share a cell with any of $P_j$, $j = 1,
  \ldots$ for otherwise $P$ would not survive the pruning given by the
  internal connectivity requirement of consistency.  Therefore, $P'$
  must share a portal with some $P_j$; the corresponding parts $K'$
  and $K_j$ would therefore also share this portal, implying that $K
  \cap K'$ is connected, a contradiction.

  Again, by the internal connectivity requirement of consistency, $P$
  is obtained from $P_j$, $j = 1, \ldots$ by:
  \begin{itemize}
  \item Removing the portals that are not in $R_0$.  The remaining
    portals are on $\partial R_0$, and $K$ connects them since $K_j$,
    $j = 1, \ldots$ connect them by the inductive hypothesis.
  \item Each cell $C$ of $P_j$ is replaced by the parent cell, which
    entirely contains $C \cap K$.
  \end{itemize}
  Finally $P$ is not removed altogether since $K$ intersects $\partial
  R_0$ and this intersection must contain a portal of $R_0$.
  Therefore, there is a part of $\piin_0$ obtained from $P$ that
  contains all the cells and portals intersected by $K$.
\end{proof}

\subsubsection*{Proof of soundness}

If $\dpt_{R_0}[\piin_0,\piout_0]$ is finite, then there must be entries
$\dpt_{R_i}[\piin_0,\piout_0]$ that are finite for $i = 1, \ldots, 4$ and such that $\dpt_{R_0}[\piin_0,\piout_0] = \sum_{i=1}^4 \dpt_{R_i}[\piin_0,\piout_0]$.
Then, by the inductive hypothesis, for $i = 1, \ldots, 4$, there is a
subsolution $F_i$ that recursively conforms to $R_i$, has length
$\dpt_{R_i}[\piin_i,\piout_i]$, and is compatible with
$\piin_i,\piout_i$.  We simply define $F_0 = \bigcup_{i=1}^4 F_i$; by
definition, $F_0$ has the desired length.  By Lemma~\ref{lem:compatible}, $F_0$ is compatible with $(\piin_0,\piout_0)$.  We show that $F_0$ conforms to $R_0$ by illustrating the four properties of conformance.

\paragraph{$\mathbf{F_0}$ satisfies the portal property} Let $K$ be a
component of $F_0 \cap \partial R_0$.  For some child $R_i$, the
intersection of $K$ with $\partial R_i \cap \partial R_0$ is nonempty.
Since $F_i$ satisfies the portal property, $K \cap \partial R_i
\cap \partial R_0$ must also contain a portal; that portal is also a
portal of $R_0$.

\paragraph{$\mathbf{F_0}$ satisfies the cell property} Let $C$ be a
cell of $R_0$ that is enclosed by child dissection square $R_i$.
Suppose for a contradiction that two connected components $K_1$ and
$K_2$ intersect both $C$ and $\partial R$.  Then $K_1 \cap R_i$ and
$K_2 \cap R_i$ must be connected components of $F_i$ that intersect
cells $C_1$ and $C_2$, respectively, and $\partial R_i$, where $C_1$
and $C_2$ are child dissection squares of $C$.  Since $F_i$ satisfies
the cell property w.r.t.\ $R_i$, $C_1 \neq C_2$ and these cells belong
to parts $P_1 \ne P_2$ of $\piin_i$.  By the internal connectivity
quirement of consistency, these cells would both get replaced by $C$,
implying that $\piin_0$ has two parts containing the same cell, a
contradiction.

\paragraph{$\mathbf{F_0}$ satisfies the terminal property}  
Consider a terminal $t$ in $R_i$ and $R_0$ such that $C_{R_i}[t]$ is in a part $P$ of $\piin_i$ (for otherwise, the terminal property follows from the inductive hypothesis).  If $t$'s mate is not in $R_0$, then, by the connecting property of valid configurations, $C_{R_0}[t]$ is in a part of $\piin_0$ and the terminal property follows from compatibility.  So suppose $t$'s mate, $t'$ is in $R_0$ (and child $R_j$).

Since the configurations are valid, $t'$ is in a part $P'$ of $\piin_j$.  If $\piout_0[C_{R_0}[t]] = \piout_0[C_{R_0}[t']]$, the terminal property follows from compatibility.  If not, then by the terminal connecting property of configuration consistency, either $\pi^\vee_0[C_{R_i}[t]] = \pi^\vee_0[C_{R_j}[t']]$. Since parts of child configurations cannot share cells, there must be a series of parts $P_1, \ldots, P_k$ where $P_1$ contains $C_{R_i}[t]$, $P_k$ contains $C_{R_j}[t']$ and parts $P_\ell$ and $P_\ell+1$ contain a common portal $p_\ell$ for $\ell = 1, \ldots, k-1$.  Since $F_1, \ldots, F_4$ are compatible with $\piin_1, \ldots, \piin_4$, respectively, by the inductive hypothesis, there is a component $K_\ell$ in $\cup_{i=1}^4 F_i$ that connects $t$ and $p_1$ (for $\ell=1$), $p_\ell$ and $p_{\ell+1}$ (for $\ell= 2, \ldots, k-1$) and $p_\ell$ to $t'$ (for $\ell = k$).  $\cup_{\ell = 1}^k K_\ell$ is a component in $F_0$ that connects $t$ and $t'$, giving the terminal property.

\paragraph{$\mathbf F_0$ satisfies the boundary property}
Since $(\piin_0,\piout_0)$ is a valid configuration, $\piin_0$ has at
most $4(\rrho+1)$ parts.  By compatibility, $F_0$ has at most
$4(\rrho+1)$ components intersecting $\partial R_0$.  This proves the
compactness property of conformance.

\subsubsection*{Proof of completeness}  

Let $\hat F_0$ be any minimal subsolution that recursively conforms to
$R_0$ and is compatible with $(\piin_0,\piout_0)$.  We show that $\hat
F_0$ has length at least $\dpt_{R_0}[\piin_0,\piout_0]$, proving
completeness.  For $i = 1, \ldots, 4$, let $\hat F_i = \hat F_0 \cap
R_i$.  \newcommand{\hpiin}{{\hat \pi}^{\text{in}}}
\newcommand{\hpiout}{{\hat \pi}^{\text{out}}} Since $\hat F_0$
recursively conforms to $R_0$, $\hat F_i$ recursively conforms to $R_i$.
For $i = 1, \ldots, 4$, let $(\hpiin_i, \hpiout_i)$ be a configuration
of $R_i$ that is compatible with $\hat F_i$.  By
Observation~\ref{obs:conform}, $(\hpiin_i, \hpiout_i)$ is a valid
configuration.  By the inductive hypothesis, $\leng(\hat F_i) \geq
\dpt_{R_i}[(\hpiin_i, \hpiout_i)]$.  It follows that $\leng(\hat F_0)
\geq \sum_{i=1}^4 \dpt_{R_i}[(\hpiin_i, \hpiout_i)]$.  If the child
configurations $\{(\hpiin_i, \hpiout_i)\}_{i=1}^4$ are consistent with
$(\piin_0,\piout_0)$, $\sum_{i=1}^4 \dpt_{R_i}[(\hpiin_i, \hpiout_i)]$
will be an argument to the minimization in {\sc populate} and
therefore $\leng(\hat F_0) \geq \dpt_{R_0}[\piin_0,\piout_0]$.  It is
therefore sufficient to show that the child configurations
$\{(\hpiin_i, \hpiout_i)\}_{i=1}^4$ are consistent with
$(\piin_0,\piout_0)$.  Equivalently, by Lemma~\ref{lem:compatible},
$\hat F_0$ is compatible with the configuration $(\hpiin_0,\hpiout_0)$
that is consistent with $\{(\hpiin_i, \hpiout_i)\}_{i=1}^4$ according
to the connectivity requirements of consistency.

\bigskip

\noindent This completes the proof of Theorem~\ref{thm:correct}.

\section{Proof of the Structure Theorem (Theorem~\ref{thm:structure})}\label{sec:struct}

In this section we give a proof of the Structure Theorem
(Theorem~\ref{thm:structure}).  We restate and reword the theorem
here for convenience.  It is easy to see that the statement here is
equivalent to the statement given in Section~\ref{sec:DP}; only the
terminal property of conformance is missing, but that is encoded by
feasibility.

\begin{reptheorem}{thm:structure}[Structure Theorem] \nonumber There is a feasible
  solution $F$ to the rounded Steiner forest problem having, in
  expectation over the random shift of the bounding box, length at
  most $\frac{2}{5} \epsilon \OPT$ more than $\OPT$ such that each
  dissection square $R$ satisfies the following three properties:
  
  \vspace{2mm}
  \begin{minipage}[r]{0.95\linewidth}
    \begin{description}
    \item[Boundary Property] For each side $S$, $F \cap S$
      has at most $\rrho$ non-corner components, where
      \begin{equation}
        \rrho=60\epsilon^{-1}\label{eq:rho}
      \end{equation}
    \item[Portal Property] Each component of $F \cap
      \partial R$ contains a portal.
    \item[Cell Property] For each cell $C$ of $R$, $F$ has at most one
      component that intersects both $\partial C$ and $\partial R$.
    \end{description}
  \end{minipage}
\end{reptheorem}

First, in a way similar to Arora, we illustrate the existence of a nearly-optimal
solution that crosses the boundary of each dissection square a small
number of times ({\em Boundary Property}) and does so at portals ({\em
  Portal Property}).  To that end, starting with the solution
$F_0$ as guaranteed by Lemma~\ref{lem:sum-of-crossings}, we augment
$F_0$ to create a solution $F_1$ that satisfies the Boundary
Components Property, then augment $F_1$ to a solution $F_2$ that also
satisfies the Portal Property.
The
{\em Cell Property} is then achieved by carefully adding to $F_2$ boundaries of cells
that violate the  Cell Property.

By Lemma~\ref{lem:sum-of-crossings}, $F_0$ is longer than $\OPT$ by
$\frac{\epsilon}{10}\OPT$.  We show that we incur an additional
$\frac{\epsilon}{10}\OPT$ in length in satisfying each of these three
properties, for a total increase in length of $\frac{4}{10} \epsilon
\OPT$, giving the Theorem.

\subsection{The Boundary Property}

We establish the Boundary Property constructively by starting with $F_1=F_0$ and adding closures
of the intersection of $F_1$ with the sides of dissection squares.
For a subset $X$ of a line, let $\text{closure}(X)$ denote the minimum
connected subset of the line that spans $X$.  For a side $S$ of a
dissection square $R$, a connected component of a subset of $S$ is a {\em
  non-corner component} if it does not include a corner of $R$
The construction is a simple greedy bottom-up procedure:

\begin{tabbing}
  {\sc SatisfyBoundary}:\\
  \quad \= For each $j$ decreasing from $\log L$ to 0,\\
  \> \quad  \= For each dissection line $\ell$ such that $\depth(\ell)\leq j$,\\
  \>\> \quad \= for each $j$-square with a side $S \subseteq \ell$, \\
  \> \>\> \quad \= if $|\{ \text{non-corner components of }F_1 \cap S\}| > \rrho$, \\
  \> \> \>\> \quad add $\text{closure}( \text{non-corner components of
  }F_1 \cap S)$ to $F_1$.
\end{tabbing}

\subsubsection*{{\sc SatisfyBoundary} establishes the
  Boundary Property}
Consider a dissection square $R$, a side $S$ of $R$, and the
dissection line $\ell$ containing $S$. The iteration involving $\ell$
and $j=\depth(\ell )$ ensures that, at the end of that iteration,
there are at most $\rrho$ components of $F_1\cap S$ not including the
endpoints of $S$, which are corners of $R$.  We need to show that
later iterations do not change this property.

Consider an iteration corresponding to $j'\leq j$, a line $\ell'$ with
$j'\geq \depth(\ell')$, and a side $S'\subseteq \ell'$ of a
$j'$-square $R'$.  By the nesting property and since $S'$ cannot be
enclosed by $S$, $S\cap \ell'$ is either empty, a corner of $R$ or
equal to $S$.  In the first case, $S \cap F_1$ is not affected by
adding a segment of $S'$.  In the second case, no new non-corner
component of $F_1\cap S$ appears.  In the third case, if adding a
segment of $S'$ would reduce $|S \cap F_1|$ to one.
See
Figure~\ref{fig:sat-bdy}. 

\begin{figure}[ht]
  \centering
  \input 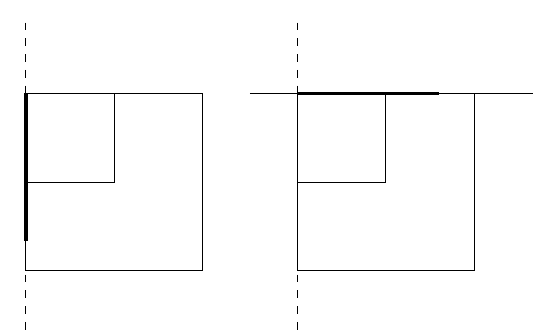_t
  \caption{The second (right) and third (left) cases for showing that
    {\sc SatisfyBoundary} can only decrease the number of components
    along the side of another dissection square or adding a corner
    component when a segement (thick line) of a dissection square side ($R\cap \ell$) is added to $F$ (not shown).}
  \label{fig:sat-bdy}
\end{figure}

\subsubsection*{The increase in length due to {\sc
    SatisfyBoundary} is small}\label{sec:leng-sat-bound}
For iteration $j$ of the outer loop and iteration $\ell$ such that $j\geq \depth(\ell)$ of the second
loop, let random variable $C_{\ell,j}$ denote the number of executions of the last step:\\
\centerline{ add $\text{closure}( \text{non-corner components of
  }F_1 \cap S)$ to $F_1$}\\
Note that, conditioning on $\depth(\ell)\leq j$,
$C_{\ell,j}$ is independent of $\depth(\ell)$ (however $C_{\ell,j}$ does depend on the random shift in the direction perpendicular to $\ell$). 
Initially the number of non-corner components of $F_1 \cap
\ell$ is at most the number of components, $|F_0 \cap \ell|$. 
As argued above: for every $j\geq \depth(\ell)$, every $j$-square either is disjoint from $\ell$ or has a side on $\ell$, so
dealing with a line $\ell'$ parallel to $\ell$ does not increase the number of components on $\ell$; For every $j<\depth(\ell)$, dealing with 
a line $\ell'$ perpendicular to $\ell$ can only introduce a corner component on $\ell$. 
So, the total number of non-corner components on $\ell$ never increases. Since it decreases by $\rrho$ at each of the $C_{\ell,j}$ 
closure operations, we have
$$ \sum_{j = \text{depth}(\ell)}^{\log L} C_{\ell, j} \leq |F_0
\cap \ell|/\rrho .$$
Since $\leng(S) = L/2^j$, the total increase in length resulting from
these executions is at most $C_{\ell,j} (L/2^j) $. Therefore, the
expected increase in length along $\ell$ is
\begin{eqnarray*}
E(\leng (F_1\cap\ell)-\leng (F_0\cap \ell))
  & \leq & \sum_i\Prob[\depth(\ell)=i]\sum_{j \geq i} E[C_{\ell,j}| \depth(\ell)=i] \frac{L}{2^j}\\
  & = & \sum_i \frac{2^i}{L} \sum_{j \geq i}  E[C_{\ell,j}| \depth(\ell)\leq j]  \frac{L}{2^j} \\
  & =    & \sum_j E[C_{\ell,j}| \depth(\ell)\leq j]\frac{1}{2^j} \sum_{i\leq j} 2^i\\
  & \leq & 2 E[\sum_{j\geq \depth (\ell)} C_{\ell,j}| \depth(\ell)]\\
  & \leq & 2 |F_0\cap \ell|/\rrho.
\end{eqnarray*}
Summing over all dissection lines $\ell$, and using the bounds on $\sum_\ell
|F_0 \cap \ell|$ and $\rrho$ as given by
Equations~\eqref{eq:crossings-vs-length}
and~\eqref{eq:rho},
respectively, we infer that the length of $F_1$ is at most
$\frac{\epsilon}{10}\OPT$ more than the length of $F_0$.
  
\subsection{The Portal Property}\label{subsection:establishing-the-portal-property}

We establish the Portal Property constructively by starting with $F_2=F_1$ and extending $F_2$
along the boundaries of dissection squares to nearest portals.  We say a component is portal-free if it does not contain
a portal.  The following construction  establishes the
Portal Property:

\begin{tabbing}
  {\sc SatisfyPortal}: \\
  \quad \= For each $j$ decreasing from $\log L$ to 0,\\
  \> \quad  \= For each dissection line $\ell$ such that $\depth(\ell)= j$,\\
  \> \> \quad \= for each portal-free component $K$ of $F_2 \cap \ell$,\\
  \> \> \> \quad\= extend $K$ to the nearest non-corner portal on $\ell$.
\end{tabbing}
  
\subsubsection*{{\sc SatisfyPortal} preserves the Boundary 
  Property} Focus on dissection line $\ell$. Before the iteration corresponding to $\ell$, 
possible extensions along lines $\ell'$ that are perpendicular to $\ell$ and of depth greater than of equal to $\depth (\ell)$
 do not extend to $\ell$, because $\ell'\cap\ell$ is a corner of $\ell'$. After the iteration corresponding to $\ell$, for each 
possible extension along lines $\ell'$ that are perpendicular to $\ell$ and of depth strictly less than  $\depth (\ell)$,
$\ell'\cap \ell$ is a corner of  any
dissection square $R$ with a side along $\ell$ containing $\ell\cap\ell'$, so the Boundary Property for $\ell$ is not violated.

\subsubsection*{The increase in length due to {\sc SatisfyPortal} is
  small} Consider a dissection line $\ell$.  When dealing with line $\ell$,  {\sc
  SatisfyPortal} only merges components  and,
in doing so, does not increase the number of components of $F_1 \cap
\ell$.  When dealing with a dissection line $\ell'$ perpendicular to $\ell$, 
As {\sc SatisfyPortal} might add  the
component $\ell \cap \ell'$  to $F_1 \cap \ell$.  However,
similar to the argument used above, in that case 
$\ell'\cap \ell$ is a corner of  any
dissection square $R$ with a side along $\ell$ containing $\ell\cap\ell'$. 
Since, by Lemma~\ref{lem:corner-portals},
corners are portals, no extension is made for this component.
Therefore, each component of $F_1 \cap \ell$ that does not already
contain a portal is an extension of what was originally already a component of $F_0 \cap
\ell$ and so, at most $|F_0 \cap \ell|$ extensions are made along $\ell$. 

Each of these extensions adds a length of at most
$L/(\aalpha 2^{\text{depth}(\ell)} )$ (the inter-portal distance for
line $\ell)$.  Therefore, the total length added along dissection line
$\ell$ is bounded by $|F_0 \cap \ell|\, L/(\aalpha 2^{\text{depth}(\ell)} )$.  Since $\Prob[\depth(\ell)=i] = 2^i/L$, the
expected increase in length due to dissection line $\ell$ is
\[
\sum_{i = 1}^{\log L} \frac{2^i}{L} |F_0\cap \ell| \frac{L}{2^i \aalpha} =
\frac{|F_0\cap \ell| \log L}{\aalpha}
\]
Summing over all dissection lines and using
Equations~\eqref{eq:crossings-vs-length}
and~\eqref{eq:inter-portal-distance}, we infer that the length of
$F_2$ is at most $\frac{\epsilon}{10}\OPT$ more than the length of
$F_1$.

\subsection{The Cell Property} \label{sec:cell-props}

We establish the Cell Property constructively by starting with $F_3 =
F_2$ and adding to $F_3$ boundaries of cells that violate the Cell
Property. 
Let $C$ be a cell of a dissection square $R$.  We say $C$ is {\em
  happy} with respect to the solution $F_3$ if there is at most one connected
component of $F_3$ that touches both the interior of $C$ and $\partial R$.
We cheer up an unhappy cell $C$ by adding to  $F_3$ a
subset $A$ of $\partial C$, as illustrated in Figure~\ref{fig:simple}:
\begin{equation}
  \label{eq:A}
  A(C,F_3) = \partial(C) \setminus \{\text{sides $S$ of }C\ : \ \depth(S)<\depth(C)\text{
    and } S\cap F_3 = \emptyset\}.
\end{equation}
Recall that each cell $C$ of $R$ is either coincident with a
dissection square that is a descendant of $R$ or is smaller than and
enclosed by a leaf dissection squares that is a descendant of $R$.
Definitions for the depth of a cell and its sides are inherited from
the definitions of dissection-square depths and dissection-line
depths.

\begin{figure}[ht]
\centering
    \includegraphics[scale=2]{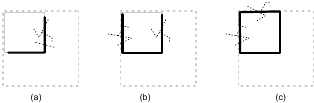}
    \caption{The three cases (up to symmetry) of augmenting $C$.  The
      dotted lines are $F_3$, $C$ is the smaller square and $C$'s
      parent is the larger square (to illustrate the relative depth of
      $C$'s sides).  In cases (a) and (b), the augmentation $A$ is not
      all of $\partial C$ so is open at the ends.  In (a), $F_3$
      intersects neither of the sides of $C$ that have depth less than
      that of $C$, so the augmentation $A$ consists only of the two
      sides having depth equal to that of $C$.  In (b), one of the
      low-depth sides intersects $F_3$, so it belongs to $A$.  In (c),
      both low-depth sides intersect $F_3$, so $A$ is all of $\partial
      C$.  }
    \label{fig:simple}
\end{figure}

\noindent Happiness of all cells, and therefore the Cell Property, is
established by the following procedure:
\begin{tabbing}
  {\sc SatisfyCellAbstract}:\\
  \quad \= While there is an unhappy cell $C$,\\
  \> \quad \= add $A(C,F_3)$ to $F_3$.
\end{tabbing}

Let $\cal C$ be the set of cells that we augment in the above
procedure.  

We claim that there is a function $h$ from
the cells $\cal C$ to the components of $F_0$ (the {\em original}
forest that we started with prior to the {\sc Satisfy} procedures) that is injective and, such
 that, for a cell $C$ of dissection square $R$,
$f(C)$ is a component of $F_0$ that intersects $\partial R$.

To define $h$, consider the following abstract directed forest $H$ whose vertices correspond to connected components of $F_0$ and whose edges correspond to augmentations made by {\sc SatisfyCell} (defined formally as follows).
An augmentation for cell $C$ is triggered
by the existence of at least two connected components $T,T'$ of the current $F_3$ that both touch the interior of $C$ 
and the boundary of its associated dissection square $R$.  Since the {\sc Satisfy} procedures augment the solution, $T$ and $T'$ each contain (at least one) connected component $T_0$ and $T_0'$ of $F_0$ -- it is the vertices corresponding to $T_0$ and $T_0'$ that are adjacent in $H$; we will show shortly that there exist such components that intersect $\partial R$.
Arbitrarily root each tree of $H$ and direct each of 
its edges away from the root. For augmentation of cell $C$, we then define $h(C)$ as the component of $F_0$ that corresponds to the head of the edge of $H$ associated with the augmentation of $C$.  Since each vertex of $H$ has indegree at most 1, $h$ is injective.

We show, by way of contradiction, that there is a component of $F_0$ contained by $T$ that intersects $\partial R$.  Consider all the components $\cal T$ of $F_0$ that are contained by $T$ and 
suppose none
of these intersect $\partial R$.  Let $\ell$  be a dissection line bounding $R$ that $T$ intersects.  Since $\cal T$ does not intersect $\partial R$, $T$ must have been created from $\cal T$ by augmentations (by way of {\sc SatisfyBoundary} and {\sc SatisfyCell}) one of which added a subset $X$ of dissection line $\ell'$ such that $X$ intersects $\ell$.  Since $\cal T$ does not intersect $X$ and neither {\sc SatisfyBoundary} nor {\sc SatisfyCell} augment to the corner of a dissection line, $\ell$ and $\ell'$ must be perpendicular.  Further $X$ is a subset of a side $S'$ of square $R'$ and does not contain a corner of $R'$.  In summary, $R$ and $R'$ are
dissection squares bounded by perpendicular dissection lines $\ell$
and $\ell'$ but for which $\ell \cap \ell'$ is not a corner of $R'$ or
$R$, contradicting that dissection squares nest.

We are now ready to give an implementation of {\sc SatisfyCellAbstract}:
\begin{tabbing}
  {\sc SatisfyCell}:\\
  \quad \= For each dissection line $\ell$,\\
  \>\quad \= for $j$ decreasing from $\log L$ to $\depth (\ell )$,\\
  \>\>\quad \= for each $j$-square $R$ with side $S \subseteq \ell$,\\
  \>\>\>\quad \= while there is an unhappy cell $C$ such that $h(C)$
  intersects $\ell$ \\
  \>\>\>\>\quad \= add $A(C,F_3)$ to $F_3$. 
\end{tabbing}
Since $h(C)$ intersects some side of some dissection square, this
procedure makes each of the cells happy.

\subsubsection*{The increase in length due to {\sc SatisfyCell} is
  small}

Let the random variable $C_{\ell,j}$ denote the number of augmentations corresponding to
dissection line $\ell$ and index $j$. Thanks to the injective mapping $h$, we have:
$$\sum_j C_{\ell,j}\leq |F_0\cap \ell|.$$

Since a cell has boundary length shorter than its $j$-square by a factor of $\ssigma$, 
the total increase in length corresponding
to these iterations is at most  $C_{\ell,j} \leng(j\text{-square})/\ssigma$.
Summing over $j$, the total length added by {\sc SatisfyCell} corresponding to
dissection line $\ell$  is at most
\[
\sum_{j \geq \depth(\ell)} C_{\ell,j} \frac{4L}{\ssigma 2^j}.
\]
Since the probability that grid line $\ell$ is a dissection line of
depth $k$ is $2^k/L$, the expected increase in length added by {\sc
  SatisfyCell} corresponding to
dissection line $\ell$  is at most
\[
\sum_k \frac{2^k}{L} \sum_{j \geq k} E[C_{\ell,j}| \depth (\ell)=k] \frac{4L}{\ssigma 2^j}.
\]
As  in
Section~\ref{sec:leng-sat-bound}, we observe that $C_{\ell,j}$ conditioned on $\depth (\ell)\leq j$ is independent of
$\depth (\ell)$. 
By the same swapping of sums as before, this is then bounded by 
$$(8/\ssigma) E[\sum_{j\geq \depth (\ell )} C(\ell, j)| \depth (\ell)]\leq \frac{8}{\ssigma}|F_0 \cap \ell|$$
Summing over all dissection lines, our bound on the expected
additional length becomes
\[
\frac{8}{\ssigma}\sum_\ell |F_0 \cap \ell|  = \frac{24}{\ssigma} (1+\epsilon) \OPT
\]
For $\ssigma = 240/\epsilon$, this is at most $\frac{\epsilon}{10}
\OPT$ by Equation~\eqref{eq:OPT-lb}.

\subsubsection*{{\sc SatisfyCell} maintains the Boundary and Portal
  Properties} 

We show that {\sc SatisfyCell} maintains the Boundary and Portal
Properties by showing that for any forest $F$ satisfying the Boundary
and Portal Properties, any single {\sc SatisfyCell} augmentation of
$F$ also satisfies these properties.

Let $C$ be an unhappy cell and let $R$ be a dissection square
satisfying the Boundary and Portal Properties.  Let $A$
be the augmentation that is used to cheer up $C$.  If $A \cap \partial
R$ contains a corner of $R$, then the Boundary Property is satisfied
because $A\cap \partial R$ would be a corner component and the Portal
Property is satisfied because the corners of dissection squares are
portals.

So, suppose that $A\cap \partial R$ is not empty but does not contain
a corner of $R$.  Refer to
Figure~\ref{fig:RandC} for relative positions of $R$ and $C$. Then $\partial C \cap \partial R$ cannot include an
entire side of $R$, so it must be that $\depth(C) > \depth(R)$.
Further, if $A\cap \partial R$ does not include a corner of $R$, then
$A \cap \partial R$ must be a subset of a single dissection line,
$\ell$.

\begin{figure}[ht]
  \centering
  \input 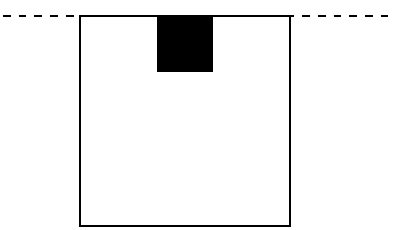_t
  \caption{Relative positions of $R$ and $C$.}
  \label{fig:RandC}
\end{figure}

If $A \cap \ell \cap F$ is not empty, then $F \cap \ell$ is not empty.
Since $F$ satisfies the Portal Property, $F \cap \ell$ also includes a
portal.  Since the addition of $A$ can only act to merge components, $|\ell \cal R \cap (F \cup A)| \le |\ell \cal R \cap A| $ and so $F$ still satisfies the Boundary Property.

If $A \cap \partial R \cap F$ is empty, then, by
Equation~\eqref{eq:A}, $\depth(\ell) \geq \depth(C)$.  But $\depth(C)
> \depth(R)$, so $\depth(\ell) > \depth(R)$.  This is impossible
because $\ell$ is a line bounding $R$.

This completes the proof of Theorem~\ref{thm:structure}.

\subsection{Proof of Theorem~\ref{thm:main}}

Recall Theorem~\ref{thm:main}, stating that there is a randomized $O(n
\polylog n)$-time approximation scheme for the Steiner forest problem
in the Euclidean plane.  The proof of this Theorem is a corollary of
Theorems~\ref{thm:structure},~\ref{thm:correct},~\ref{thm:modify2} and
Lemma~\ref{lemma:rounding} as follows.  Theorem~\ref{thm:correct}
guarantees that we can compute, using dynamic programming, a solution
that satisfies Theorem~\ref{thm:structure}.
Section~\ref{sec:run-time} argues that this DP takes $O(n \polylog n)$
time.  Lemma~\ref{lemma:rounding} and Theorem~\ref{thm:modify2} shows that we can convert the solution(s), of near-optimal cost, guaranteed by Theorem~\ref{thm:structure} to near-optimal solutions for the original problem, thus giving Theorem~\ref{thm:main}.

\section{Conclusion}

We have given a randomized $O(n \mathop{poly}\log n)$-time approximation
scheme for the Steiner forest problem in the Euclidean plane.
Previous to this result polynomial-time approximation schemes (PTASes)
have been given for subset-TSP~\cite{Klein06} and Steiner
tree~\cite{BKK07,BKM09} in planar graphs, using ideas inspired from
their geometric counterparts.  Since the conference version of this
paper appeared, a PTAS has been given for Steiner forest in planar
graphs by Bateni et~al.~\cite{BHM10}.  Like our result here, Bateni
et~al.\ first partition the problem and then face the same issue of
maintaining feasibility that we presented in
Section~\ref{sec:overview}, except in graphs of bounded treewidth.
They overcome this by giving a PTAS for Steiner forest in graphs of
bounded treewidth; they also show this problem in NP-complete, even in
graphs of treewidth 3.  Recently we have seen this technique generalized to prize collecting versions of the problem for both Euclidean and planar~\cite{BCEHKM11} instances.

\bibliographystyle{plain}
\bibliography{long,steiner_forest}

\end{document}